\renewenvironment{IEEEbiography}[1]
  {\IEEEbiographynophoto{#1}}
  {\endIEEEbiographynophoto}
\newcounter{MYtempeqncnt}
\global\long\def\s[#1]{\textnormal{\scriptsize #1}}
\newcommand {\bxt} {\mbox{\footnotesize\boldmath $x$}}
\newcommand {\byt} {\mbox{\footnotesize\boldmath $y$}}
\newcommand {\bx} {\mbox{\boldmath $x$}}
\newcommand {\by} {\mbox{\boldmath $y$}}
\newcommand {\bE} {\mathbb{E}}
\newcommand {\bX} {\mbox{\boldmath $X$}}
\newcommand {\bY} {\mbox{\boldmath $Y$}}
\newcommand {\bXt} {\mbox{\boldmath \footnotesize $X$}}
\newcommand{\calA}{{\cal A}}
\newcommand{\calC}{{\cal C}}
\newcommand{\calD}{{\cal D}}
\newcommand{\calE}{{\cal E}}
\newcommand{\calG}{{\cal G}}
\newcommand{\calI}{{\cal I}}
\newcommand{\calL}{{\cal L}}
\newcommand{\calQ}{{\cal Q}}
\newcommand{\calR}{{\cal R}}
\newcommand{\calS}{{\cal S}}
\newcommand{\calT}{{\cal T}}
\newcommand{\calU}{{\cal U}}
\newcommand{\calV}{{\cal V}}
\newcommand{\calW}{{\cal W}}
\newcommand{\calX}{{\cal X}}
\newcommand{\calY}{{\cal Y}}
\newcommand{\be}{\begin{equation}}
\newcommand{\ee}{\end{equation}}
\newcommand{\beqna}{\begin{eqnarray}}
\newcommand{\eeqna}{\end{eqnarray}}
\DeclareFontFamily{U}{mathx}{\hyphenchar\font45}
\DeclareFontShape{U}{mathx}{m}{n}{
      <5> <6> <7> <8> <9> <10>
      <10.95> <12> <14.4> <17.28> <20.74> <24.88>
      mathx10
      }{}
\DeclareSymbolFont{mathx}{U}{mathx}{m}{n}
\DeclareMathSymbol{\bigtimes}{1}{mathx}{"91}
\newcommand{\abs}[1]{\left|#1\right|}
\newtheorem{theorem}{Theorem}
\newtheorem{proof}{Proof}
\newtheorem{example}{Example}
\newtheorem{lemma}{Lemma} 
\newtheorem{corollary}{Corollary}
\newtheorem{rem}{Remark}
\newcommand{\p}[1]{\left(#1\right)}
\newcommand{\pp}[1]{\left[#1\right]}
\newcommand{\ppp}[1]{\left\{#1\right\}}
\renewcommand\[{\begin{equation}}
\renewcommand\]{\end{equation}}
\begin{document}
\title{Erasure/List Random Coding Error Exponents Are Not Universally Achievable}
\author{Wasim~Huleihel
				~~~~~~~Nir~Weinberger
        ~~~~~~~Neri~Merhav
				\\
        Department of Electrical Engineering \\
Technion - Israel Institute of Technology \\
Haifa 3200003, ISRAEL\\
E-mail: \{wh@campus, nirwein@campus, merhav@ee\}.technion.ac.il
\thanks{This research was partially supported by The Israeli Science Foundation (ISF), grant no. 412/12. This paper was presented in part at the 2015 IEEE Information Theory Workshop (ITW), and the 2015 Information Theory and Applications (ITA) Workshop.}
}
\date{}
\maketitle
\thispagestyle{empty}

\IEEEpeerreviewmaketitle

\begin{abstract}
We study the problem of universal decoding for unknown discrete memoryless channels in the presence of erasure/list option at the decoder, in the random coding regime. Specifically, we harness a universal version of Forney's classical erasure/list decoder developed in earlier studies, which is based on the competitive minimax methodology, and guarantees universal achievability of a certain fraction of the optimum random coding error exponents. In this paper, we derive an exact single-letter expression for the maximum achievable fraction. Examples are given in which the maximal achievable fraction is strictly less than unity, which imply that, in general, there is no universal erasure/list decoder which achieves the same random coding error exponents as the optimal decoder for a known channel. This is in contrast to the situation in ordinary decoding (without the erasure/list option), where optimum exponents are universally achievable, as is well known. It is also demonstrated that previous lower bounds derived for the maximal achievable fraction are not tight in general. We then analyze a generalized random coding ensemble which incorporate a training sequence, in conjunction with a suboptimal practical decoder (``plug-in" decoder), which first estimates the channel using the known training sequence, and then decodes the remaining symbols of the codeword using the estimated channel. One of the implications of our results, is setting the stage for a reasonable criterion of optimal training. Finally, we compare the performance of the ``plug-in" decoder and the universal decoder, in terms of the achievable error exponents, and show that the latter is noticeably better than the former.
\end{abstract}
\begin{IEEEkeywords}
Universal decoding, error exponents, erasure/list decoding, maximum-likelihood decoding, random coding, generalized likelihood ratio test, training sequence, plug-in decoder, channel uncertainty, competitive minimax.
\end{IEEEkeywords}
\section{Introduction}
\IEEEPARstart{I}{n} many practical situations encountered in coded communication systems, the channel over which transmission takes place is unknown to the receiver. Typically, the optimal maximum likelihood (ML) decoder depends on the channel statistics, and therefore its usage is precluded. In such cases, universal decoders are sought which do not require knowledge of the actual channel, but still preform well just as if the channel was known to the decoder. The design of such universal decoders was extensively addressed for ordinary decoding (without the erasure/list option), see, e.g., \cite{Goppa,ZivUni,Csis2,csiszar2011information,NeriUni,UniNeri2,FerderLapidoth}, and references therein. For example, for unknown discrete memoryless channels (DMCs), the maximum mutual information (MMI) decoder \cite{Goppa} is asymptotically optimal for ordinary decoding, in the sense that it achieves the same random coding error exponents as the ML decoder. However, for decoders with an erasure/list option, only partial results exist. 

In this paper, we focus on universal erasure/list decoders proposed and analyzed by Forney for known channels \cite{Forney68}. Erasure/list decoding is especially attractive for unknown channels, since communicating at any fixed rate, however small, is inherently problematic, since this fixed rate might be larger than the unknown capacity of the underlying channel. It makes sense to try to adapt the coding rate to the channel conditions, which can be learned on-line at the transmitter whenever a feedback link from the receiver to the transmitter is available. A possible approach to handle the problem described above is the \emph{rateless coding} methodology, see, for example \cite{Burnashev,ShluRate,rate1,rate2,rate3,rate4}, in which at every time instant the decoder either makes a decision on one of the transmitted messages or decides to request an additional symbol via the feedback line. The latter case can be considered as an ``erasure" event for the decoder, and so universal erasure decoders are required (see discussion in \cite{universal_minimax_erasure}). 

In \cite[Chapter 10, Theorem 10.11]{csiszar2011information}, Csisz{\'a}r and K{\"o}rner proposed a family of universal erasure decoders, parametrized by some real parameter, for DMCs, and analyzed the resulting error exponents. While this family is in the spirit of the MMI decoder, it does not achieve the same exponents as Forney's optimal erasure/list decoder. More recently, in \cite{Moulin_universal_erasure}, Moulin has generalized this family of decoders and proposed a family of decoders parametrized by a weighting function. An optimal weighting function was sought which maximizes the total error exponent of the worst channel in the family, under a constraint on the worst channel undetected-error exponent (the worst channel associated with the two exponents might be different). The decoder was considered universal if the above mentioned trade-off between the worst case exponents does not change even if the choice of specific decoder in the family of allowed decoders can depend on the channel (see \cite[Eq. (3.11)]{Moulin_universal_erasure}, and the discussion that follows). However, this is a rather weak criterion, in the sense that the optimal decoder only depends on the worst case exponents. So, if the family of channels is rich enough (e.g. includes channels whose capacity is lower than the required rate), then the worst case exponents are simply zero, and any decoder is universal. To this end, a stronger criterion for universality was proposed, which states that a decoder is universal if it achieves Forney's exponents (for a known channel) for all channels in the family. In \cite[Proposition 5.5]{Moulin_universal_erasure}, Moulin provided sufficient conditions under which the decoder of Csisz{\'a}r and K{\"o}rner is universal in the strong sense. Loosely speaking, it is required that the total error exponent is small enough for all channels in the family. These conditions, however, strongly limit the families of channels for which this decoder is universal.

In \cite{universal_minimax_erasure}, Merhav and Feder studied the problem using a different approach. Specifically, they considered the problem of universal decoding with an erasure/list option for the class of DMCs indexed by an unknown parameter $\theta$. They invoked the competitive minimax methodology proposed in \cite{merFeder}, in order to derive a universal version of Forney's classical erasure/list decoder. Recall that for a given DMC with parameter $\theta$, a given coding rate $R$, and a given threshold parameter $T$ (all to be formally defined later), Forney's erasure/list decoder optimally trades off between the exponent, $E_1(R, T, \theta)$, of the probability of total error event, $\calE_1$, and the exponent, $E_2(R, T, \theta) = E_1(R, T, \theta) + T$, of the probability of undetected error event, $\calE_2$, for an erasure decoder (or, average list size for list decoder), in the random coding regime. The universal erasure/list decoder of \cite{universal_minimax_erasure} guarantees achievability of an exponent, $\hat{E}_1(R, T, \theta)$, which is at least as large as $\xi\cdot E_1(R, T, \theta)$ for all $\theta$, for some constant $\xi\in(0,1]$ that is independent of $\theta$ (but does depend on $R$ and $T$), and at the same time, an undetected error exponent for erasure decoder (or, average list size for list decoder) $\hat{E}_2(R, T, \theta)\geq\xi\cdot \hat{E}_1(R, T, \theta)+T$ for all $\theta$. At the very least this guarantees that whenever the probabilities of $\calE_1$ and $\calE_2$ decay exponentially for a known channel, so they do even when the channel is unknown, using the proposed universal decoder. It should be remarked, that the benchmark exponents in \cite{universal_minimax_erasure} were the classical lower bounds on $E_1(R, T, \theta)$ and $E_2(R, T, \theta)$ derived by Forney \cite{Forney68}. 

Clearly, to maximize the guaranteed exponents obtained by the universal decoder of \cite{universal_minimax_erasure}, the maximal $\xi\in\pp{0,1}$ such that the above holds is of interest. This maximal fraction is the central quantity of this paper and will be denoted henceforth by $\xi^*(R,T)$. If, for example, $\xi^*(R,T)$ is strictly less than unity, then it means that there is a major difference between universal ordinary decoding and universal erasure/list decoding: while for the former, it is well known that optimum random coding error exponents are universally achievable (at least for some classes of channels and certain random coding distributions), in the latter, when the erasure/list options are available, this may no longer be the case\footnote{We could have similarly required that the universal decoder would achieve an undetected error exponent of $\hat{E}_2(R,T,\theta)\geq \tilde{\xi}\cdot E_2(R,T,\theta)$ for all $\theta\in\Theta$, and some $\tilde{\xi}\in(0,1]$. While the numerical value of the maximal achievable $\tilde{\xi}$, say $\tilde{\xi}^*(R,T,\theta)$, will be different from $\xi^*(R,T,\theta)$, the main conclusions of the paper will not change. Specifically, $\xi^*(R,T,\theta)<1$ if and only if $\tilde{\xi}^*(R,T,\theta)<1$.}. In \cite{universal_minimax_erasure}, Merhav and Feder invoked Gallager's bounding techniques to analyze the exponential behavior of upper bounds on the probabilities $\calE_1$ and $\calE_2$. Accordingly, a single-letter expression for a lower bound to $\xi^*(R,T)$ was obtained, which we denote henceforth by $\xi_L(R,T)$. Since $\xi_L(R,T)$ was merely a lower bound, the question of achievability of Forney's erasure/list exponents was not fully settled in \cite{universal_minimax_erasure}\footnote{Note that universality in the weak sense in \cite{Moulin_universal_erasure} does not guarantee that $\xi^*(R,T)$ is larger than zero because this weak criterion only considers the worst case channels. A universal decoder in the stronger sense in \cite{Moulin_universal_erasure} does imply that $\xi^*(R,T)=1$, but, as previously mentioned, such universality was proved only for a restricted families of channels.}. 

As was previously mentioned, even for a known channel, only lower bounds for the exponents were obtained by Forney \cite{Forney68}. More recently, inspired by a statistical-mechanical point of view on random code ensembles, Somekh-Baruch and Merhav \cite{exact_erasure} have found \emph{exact} expressions for the exponents of the optimal erasure/list decoder, by assessing the moments of certain type class enumerators. In this paper, we tackle again the problem of erasure/list channel decoding using similar methods, and derive an \emph{exact} expression for $\xi^*(R,T)$ with respect to the \emph{exact} erasure/list exponents of a known channels found in \cite{exact_erasure}. Unlike the lower bound of \cite{universal_minimax_erasure}, the exact expression leads to the following conclusions:

\begin{enumerate}
\item In general, $\xi^*(R,T)$ is strictly less than $1$. Therefore, the known channel exponents in erasure/list decoding cannot be achieved universally. In this sense, channel knowledge is crucial for asymptotically optimum erasure/list decoding. This is in sharp contrast to the situation in ordinary decoding (without the erasure/list option), where, as said, optimum exponents are universally achievable, e.g., by the MMI decoder.
\item In general, $\xi_L(R,T)$ is strictly less than $\xi^*(R,T)$. Therefore, the Gallager-style analysis technique in \cite{universal_minimax_erasure} is not always powerful enough to obtain $\xi^*(R,T)$.
\end{enumerate}

Although the above universal decoder achieves $\xi^*(R,T)$, it may have a rather high implementation complexity. Usually, in practical communication systems with channel uncertainty, a portion of the blocklength is devoted to training which is a common part of all codewords. A possible practical decoder is the ``plug-in" decoder, which first estimates the channel using the known training sequence, and then decodes the remaining symbols of the codeword using the estimated channel from the first stage. This suboptimal decoder, on the one hand, has a smaller complexity, and thus can be more easily incorporated into practical systems, but on the other hand, achieves only some $\xi^e(R,T)\leq\xi^*(R,T)$. For this sub-optimal decoder, we derive its error exponents and a closed-form formula for $\xi^e(R,T)$, which now depend also on the relative training time and the type of the sequence. One implication of our results, is setting the stage for a reasonable criterion of optimal training. Finally, we show numerically that there is a noticeable loss in the error exponents incurred by the plug-in decoder compared to the universal decoder.

The outline of the rest of  the paper is as follows. In Section \ref{sec:not}, we establish notation conventions, and in Section \ref{sec:back} we detail necessary background on erasure/list decoding, both for known and unknown channels. Then, in Section \ref{sec:Main}, we present our main result of an exact expression for $\xi^*(R,T)$, and discuss the special case of binary symmetric channel (BSC). We then shed light on the differences between $\xi^*(R,T)$ and $\xi_L(R,T)$, along with some numerical results, which illustrate the main result of this paper. In Section \ref{sec:training}, we analyze generalized random coding ensembles which incorporates a training sequence, in conjunction with the suboptimal plug-in decoder and the universal decoder, and compare its performance with the universal decoder and the optimal decoder (for known channel). Finally, in Section \ref{sec:proofs}, we provide proofs for all our results. 
\allowdisplaybreaks

\section{Notation Conventions}\label{sec:not}

Throughout this paper, scalar random variables (RVs) will be denoted by capital letters, their sample values will be denoted by the respective lower case letters, and their alphabets will be denoted by the respective calligraphic letters, e.g. $X$, $x$, and $\calX$, respectively. A similar convention will apply to random vectors of dimension $n$ and their sample values, which will be denoted with the same symbols in the boldface font. The set of all $n$-vectors with components taking values in a certain finite alphabet, will be denoted as the same alphabet superscripted by $n$, e.g., $\calX^n$. Generic channels will be usually denoted by the letters $P$, $Q$, or $W$. We shall mainly consider joint distributions of two RVs $(X,Y)$ over the Cartesian product of two finite alphabets $\calX$ and $\calY$. For brevity, we will denote any joint distribution, e.g. $Q_{XY}$, simply by $Q$, the marginals will be denoted by $Q_X$ and $Q_Y$, and the conditional distributions will be denoted by $Q_{X\vert Y}$ and $Q_{Y\vert X}$. The joint distribution induced by $Q_{X}$ and $Q_{Y|X}$ will be denoted by $Q_{X}\times Q_{Y|X}$, and a similar notation will be used when the roles of $X$ and $Y$ are switched. 

The expectation operator will be denoted by $\bE\ppp{\cdot}$, and when we wish to make the dependence on the underlying distribution $Q$ clear, we denote it by $\bE_Q\ppp{\cdot}$. The entropy of $X$ and the conditional entropy of $X$ given $Y$, will be denoted $H_X(Q)$, $H_{X\vert Y}(Q)$, respectively, where $Q$ is the underlying probability distribution. The mutual information of the joint distribution $Q$ will be denoted by $I(Q)$. The divergence (or, Kullback-Liebler distance) between two probability measures $Q$ and $P$ will be denoted by $D(Q||P)$. For two numbers $0\leq q,p\leq 1$, $D(q||p)$ will stand for the divergence between the binary measures $\ppp{q,1-q}$ and $\ppp{p,1-p}$. 

For a given vector $\bx$, let $\hat{Q}_{\bxt}$ denote the empirical distribution, that is, the vector $\{\hat{Q}_{\bxt}(x),~x\in{\cal X}\}$, where $\hat{Q}_{\bxt}(x)$ is the relative frequency of the letter $x$ in the vector $\bx$. Let ${\cal T}_{P}$ denote the type class associated with $P$, that is, the set of all sequences $\bx$ for which $\hat{Q}_{\bxt}=P$. Similarly, for a pair of vectors $(\bx,\by)$, the empirical joint distribution will be denoted by $\hat{Q}_{\bxt\byt}$, or simply by $\hat{Q}$, for short. All the previously defined notations for regular distributions will also be used for empirical distributions.

The cardinality of a finite set $\calA$ will be denoted by $\abs{\calA}$, its complement will be denoted by $\calA^c$. The probability of an event $\calE$ will be denoted by $\Pr\ppp{\calE}$. The indicator function of an event $\calE$ will be denoted by $\calI\ppp{\calE}$. For two sequences of positive numbers, $\ppp{a_n}$ and $\ppp{b_n}$, the notation $a_n\doteq b_n$ means that $\ppp{a_n}$ and $\ppp{b_n}$ are of the same exponential order, i.e., $n^{-1}\log a_n/b_n\to0$ as $n\to\infty$, where in this paper, logarithms are defined with respect to (w.r.t.) the natural basis, that is, $\log(\cdot)\equiv\ln(\cdot)$. Finally, for a real number $x$, we let $\abs{x}^+ \triangleq \max\ppp{0,x}$.

\section{Model Formulation and Short Background}\label{sec:back}
\subsection{Known Channel}\label{subsec:back1}
Consider a DMC with a finite input alphabet $\calX$, finite output alphabet $\calY$, and a matrix of single-letter transition probabilities $\ppp{W\p{y\vert x},\ x\in\calX,\;y\in\calY}$. A rate-$R$ codebook consists of $M=\left\lceil e^{ nR}\right\rceil$ length-$n$ codewords $\bx_m\in\calX^n$, $m=1,2,\ldots,M$, representing the $M$ messages. It will be assumed that all messages are a-priori equiprobable. We assume the ensemble of fixed composition random codes of blocklength $n$, where each codeword is selected at random, uniformly within a type class $\calT(P_X)$ for some given random coding distribution $P_X$ over the alphabet $\calX$.

In the following, we give a short description on the operation of the erasure decoder and then the list decoder. A decoder with an erasure option is a partition of the observation space $\calY^n$ into $\p{M+1}$ regions, denoted by $\ppp{\calR_m}_{m=0}^M$. An erasure decoder works as follows: If $\by\in\calY^n$ falls into the $m$th region, $\calR_m$, for $m=1,2,\ldots,M$, then a decision is made in favor of message number $m$. If $\by\in\calR_0$, then no decision is made and an erasure is declared. Accordingly, we shall refer to $\by\in\calR_0$ as an \emph{erasure event}. Given a code $\calC\triangleq\ppp{\bx_1,\ldots,\bx_M}$ and a decoder $\calR \triangleq \p{\calR_0,\ldots,\calR_M}$, we define two error events. The event $\calE_1$ is the event of deciding on erroneous codeword or making an erasure, and the event $\calE_2$ which is the undetected error event, namely, the event of deciding on erroneous codeword. It is evident that $\calE_1$ is the disjoint union of the erasure event and $\calE_2$. The probabilities of all the aforementioned events are given by:
\begin{align}
&\Pr\ppp{\calE_1} = \frac{1}{M}\sum_{m=1}^M\sum_{\byt\in\calR_m^c}W\p{\by\vert\bx_m},\label{E1Def}\\
&\Pr\ppp{\calE_2} = \frac{1}{M}\sum_{m=1}^M\sum_{\byt\in\calR_m}\sum_{m'\neq m}W\p{\by\vert\bx_{m'}},\label{E2Def}
\end{align}
and
\begin{align}
\Pr\ppp{\calR_0} = \Pr\ppp{\calE_1}-\Pr\ppp{\calE_2}.
\end{align}

A list decoder is a mapping from the space of received vectors ${\cal Y}^{n}$ into a collection of the subsets of $\{1,\ldots,M\}$. Alternatively, a list decoder is uniquely defined by a set of $M+1$ (not necessarily disjoint) decoding regions $\{{\cal R}_{m}\}_{m=0}^{M}$ such that ${\cal R}_{m}\subseteq{\cal Y}^{n}$ and ${\cal R}_{0}={\cal Y}^{n}\backslash\bigcup_{m=1}^{M}{\cal R}_{m}$. Given a received vector $\by$, the $m$th codeword belongs to the output list if $\by\in{\cal R}_{m}$, and if $\by$ does not belong to any of the regions ${\cal R}_{m}$ then $\by\in{\cal R}_{0}$, and an erasure is declared. The average error probability of a list decoder and a codebook ${\cal C}$ is the probability that the actual transmitted codeword does not belong to the output list, and it is defined similarly to \eqref{E1Def}. The average list size is the expected (w.r.t. the output of the channel) number of erroneous codewords in the output list, and it is easily verified that it is defined exactly as in \eqref{E2Def} (see  \cite[Eq. (13)]{Forney68}).

Since the error events for the erasure and list decoders are defined in the same way, they can be treated on the same footing. Nonetheless, for descriptive purposes, we will refer to the erasure decoder, but we emphasize that all the following analysis and results are true also for the list decoder. When knowledge on the specific DMC is available at the decoder, Forney has shown in \cite{Forney68}, using the Neyman-Pearson methodology, that the optimal trade-off between $\Pr\ppp{\calE_1}$ and $\Pr\ppp{\calE_2}$ is attained by the decision regions $\calR^*\triangleq\p{\calR_0^*,\ldots,\calR_M^*}$ given by:
\begin{align}
{\cal R}^*_{m}\triangleq\ppp{\by:\;\frac{W\p{\by\vert\bx_m}}{\sum_{m'\neq m}W\p{\by\vert\bx_{m'}}}\geq e^{nT}},\ m=1,2,\ldots M,\label{optDD1}
\end{align}
and
\begin{align}
\calR_{0}^*\triangleq\bigcap_{m=1}^{M}\p{\calR_{m}^*}^{c},\label{optDD2}
\end{align}
where $T$ is a parameter, henceforth referred as the \emph{threshold}, which controls the balance between the probabilities of $\calE_1$ and $\calE_2$. When $T\geq0$ the decoder operates in the erasure mode, and when it is in the list mode then $T<0$. No other decision rule gives both a lower $\Pr\ppp{\calE_1}$ and a lower $\Pr\ppp{\calE_2}$ than the above choice. Finally, we define the error exponents $E_i\p{R,T},\;i=1,2$, as the exponents of the average probabilities of errors $\overline{\Pr}\ppp{\calE_i}$ (associated with the optimal decoder $\calR^*$), where the average is taken w.r.t. a given ensemble of the randomly selected codes, that is,
\begin{align}
E_i\p{R,T}\triangleq-\liminf_{n\to\infty}\frac{1}{n}\log\overline{\Pr}\ppp{\calE_i},\ i=1,2.\label{errorExpDef}
\end{align}  
An important observation is that Forney's decision rule for known DMCs can also be obtained by formulating the following optimization problem: Find a decoder $\calR$ that minimizes $\Gamma\p{\calC,\calR}$ where 
\begin{align}
\Gamma\p{\calC,\calR}&\triangleq\Pr\ppp{\calE_2}+e^{-nT}\Pr\ppp{\calE_1}\label{gammaTh0}\\
&=\frac{1}{M}\sum_{m=1}^{M}\left[\sum_{\byt\in{\cal R}_{m}}\sum_{m'\neq m}W(\by|\bx_{m'})\right.\nonumber\\
&\left.\ \ \ \ \ \ \ \ \ \ \ \ \ \ \ \ \ +\sum_{\byt\in{\cal R}_{m}^{c}}e^{-nT}W(\by|\bx_{m})\right]\label{gammaTh}
\end{align}
for a given codebook $\calC$ and a given threshold $T$. Indeed, noting that \eqref{gammaTh} can be rewritten as
\begin{align}
\Gamma\p{\calC,\calR} &=\sum_{\byt\in{\calY}^n}\frac{1}{M}\sum_{m=1}^{M}\left[\sum_{m'\neq m}W(\by|\bx_{m'})\calI\ppp{\by\in{\cal R}_{m}}\right.\nonumber\\
&\left.\ \ \ \ \ \ \ \ \ \ \ \ \ \ \ \ +\vphantom{\sum_{m'\neq m}}e^{-nT}W(\by|\bx_{m})\calI\ppp{\byt\in{\calR}^c_{m}}\right],
\end{align}
it is evident that for each $m$, the bracketed expression is minimized by $\calR_m^*$ as defined above. By taking the ensemble average, we have
\begin{align}
\bE\ppp{\Gamma\p{\calC,\calR^*}}\triangleq\overline{\Pr}\ppp{\calE_2}+e^{-nT}\overline{\Pr}\ppp{\calE_1}.\label{TwoTerms}
\end{align}
In \cite{exact_erasure}, it was stated (without a proof) that, in the exponential scale, there is a balance between the two terms at the right hand side of \eqref{TwoTerms}, namely, the exponent of $\overline{\Pr}\ppp{\calE_2}$ equals to the exponent of $e^{-nT}\overline{\Pr}\ppp{\calE_1}$, for the optimal decoder $\calR^*$. We rigorously assert this property in the following lemma, the proof of which appears in Appendix \ref{app:1}.
\begin{lemma}\label{lem:1}
For all $R$ and $T$, the optimal decoder $\calR^*$ satisfies:
\begin{align}
E_2\p{R,T} = T+E_1\p{R,T}.
\end{align}
\end{lemma}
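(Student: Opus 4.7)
My proof plan: The claim is equivalent to $\overline{\Pr}\{\calE_2\}\exe e^{-nT}\,\overline{\Pr}\{\calE_1\}$, i.e.\ the two nonnegative terms on the right-hand side of \eqref{TwoTerms} share the same exponential decay rate. Abbreviate $A(T)\triangleq\overline{\Pr}\{\calE_2\}$, $B(T)\triangleq\overline{\Pr}\{\calE_1\}$, and $\bar\Gamma(T)\triangleq A(T)+e^{-nT}B(T)$; note that $A(T)$ and $B(T)$ depend on $T$ through the decoder $\calR^*=\calR^{*,T}$.

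The central tool is a Neyman--Pearson envelope identity. Since, for every realisation of the random code $\calC$, the decoder $\calR^{*,T}$ minimises the pointwise cost $\Gamma(\calC,\calR)=\Pr\{\calE_2(\calR)\}+e^{-nT}\Pr\{\calE_1(\calR)\}$ over the (finite) family of decoders, the envelope theorem applied in $\calC$ at values of $T$ where the minimiser is unique, and averaged over the ensemble, gives $\bar\Gamma'(T)=-n\,e^{-nT}B(T)$. Differentiating the decomposition $\bar\Gamma = A+e^{-nT}B$ and matching then produces $A'(T)=-e^{-nT}B'(T)$; integrating with $A(\infty)=0$ yields the integral representation
\begin{equation}\label{eq:planintrep}
A(T)=\int_T^{\infty}e^{-ns}\,B'(s)\,ds.
\end{equation}

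For the easy direction $E_2(R,T)\ge T+E_1(R,T)$, the envelope identity gives $|\bar\Gamma'(T)|=n\,e^{-nT}B(T)\exe e^{-n(T+E_1(R,T))}$, and recovering $\bar\Gamma(T)=\int_T^\infty|\bar\Gamma'(s)|\,ds$ via Laplace's method---using that $s\mapsto s+E_1(R,s)$ is nondecreasing (equivalently $|E_1'|\le 1$, which can be deduced from the Gallager-type structure of the exponent)---yields $\bar\Gamma(T)\exe e^{-n(T+E_1(R,T))}$. Combined with $\bar\Gamma\ge A$, this delivers $E_2\ge T+E_1$. For the reverse inequality $E_2(R,T)\le T+E_1(R,T)$, fix any $\delta>0$ and lower-bound \eqref{eq:planintrep} by restricting to $s\in[T,T+\delta]$ and using $e^{-ns}\ge e^{-n(T+\delta)}$, obtaining $A(T)\ge e^{-n(T+\delta)}\bigl[B(T+\delta)-B(T)\bigr]$. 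Whenever $E_1(R,T+\delta)<E_1(R,T)$ the bracketed difference has exponent $E_1(R,T+\delta)$, giving $E_2(R,T)\le T+\delta+E_1(R,T+\delta)$; letting $\delta\downarrow 0$ and invoking continuity of $E_1(R,\cdot)$ closes the argument.

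The main obstacle is that both steps rely on mild regularity of $E_1(R,\cdot)$ (continuity, Lipschitz with constant $1$, strict monotonicity in $T$), which hold for the random-coding exponents at hand but require verification, and on justifying the differentiation-under-expectation and envelope manipulations at the measure-zero set of thresholds where $\calR^{*,T}$ is non-unique. A cleaner route that sidesteps these subtleties is to appeal directly to the exact single-letter expressions for $E_1(R,T)$ and $E_2(R,T)$ derived via type-enumerator methods in \cite{exact_erasure}: in those formulae the threshold $T$ enters additively into the constraints of the defining optimisation problems, so the identity $E_2-E_1=T$ can be verified by a direct algebraic comparison of the two optimisations.
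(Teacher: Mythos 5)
Your plan takes a genuinely different route from the paper. The paper argues by contradiction: assuming, say, $E_2(R,T)<T+E_1(R,T)$, it observes $\mathbb{E}[\Gamma(\calC,\calR^*(T),T)]\doteq e^{-nE_2(R,T)}$, then perturbs the threshold to $T+\epsilon$ and shows, using continuity and the one-sided monotonicity of $E_1$ (nonincreasing in $T$) and $E_2$ (nondecreasing in $T$), that $\mathbb{E}[\Gamma(\calC,\calR^*(T+\epsilon),T)]\doteq e^{-nE_2(R,T+\epsilon)}$ is exponentially smaller, contradicting the pointwise optimality of $\calR^*(T)$ as minimizer of $\Gamma(\calC,\cdot,T)$; the plateau cases of $E_1$ and $E_2$ are then disposed of by a dedicated case analysis. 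By contrast, you set up the Neyman--Pearson envelope identity $A'(T)=-e^{-nT}B'(T)$ and the integral representation $A(T)=\int_T^\infty e^{-ns}B'(s)\,ds$, which is an elegant ROC-curve dual of the paper's perturbation argument and correct in spirit.

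However, the plan contains two gaps that are more than ``mild regularity to be verified.'' First, your forward direction $E_2\ge T+E_1$ rests on $s\mapsto s+E_1(R,s)$ being nondecreasing, i.e.\ $|E_1'|\le 1$, so that Laplace's method localizes the integral at $s=T$. But that property is precisely a \emph{consequence} of $E_2=T+E_1$ together with the nondecreasingness of $E_2$, so invoking it here is circular unless you prove it independently; and it is not obvious ``from the Gallager-type structure,'' because the Lagrange multiplier dual to the $\hat\Omega\le0$ constraint in the exact exponent \eqref{AneliaExp}--\eqref{AneliaExp2} carries no built-in bound of $1$ (only Forney's looser bound enjoys the constraint $s\le\rho\le1$). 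Second, your reverse direction reads the exponent of $B(T+\delta)-B(T)$ off as $E_1(R,T+\delta)$, which is valid only when $E_1(R,T+\delta)<E_1(R,T)$ strictly; at a plateau of $E_1(R,\cdot)$ the difference $B(T+\delta)-B(T)$ may be exponentially negligible relative to $B(T+\delta)$, and the argument gives nothing. The paper devotes a full paragraph to exactly these plateau configurations. Finally, your fallback of verifying $E_2-E_1=T$ algebraically from the exact single-letter formulae is sound, but it inverts the purpose of the lemma, which is to make a separate single-letter evaluation of $E_2$ unnecessary (indeed the paper presents only the $E_1$ formula).
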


The significance of Lemma \ref{lem:1} is attributed to the fact that now we only need to assess the exponential behavior of either $\overline{\Pr}\ppp{\calE_1}$, or, $\overline{\Pr}\ppp{\calE_2}$, but not both. As was mentioned in the Introduction, in \cite{exact_erasure}, Somekh-Baruch and Merhav have obtained exact single-letter formulas for the error exponents $E_1(R,T)$ and $E_2(R,T)$ associated with $\overline{\Pr}\ppp{\calE_1}$ and $\overline{\Pr}\ppp{\calE_2}$, respectively. Specifically, they show, that for the ensemble of fixed composition codes \cite[Theorem 1]{exact_erasure}\footnote{In \cite{exact_erasure}, each codeword in the codebook was drawn independently of all other codewords, and its symbols were drawn from an independent and identically (i.i.d.) distribution (identical for all the codewords). Nonetheless, the modification to the ensemble of fixed composition codes is straightforward.}\footnote{We note that there is an error at the end of the proof of Theorem 1 in \cite{exact_erasure}, where it was claimed that $\min\ppp{E_a(R,T),E_b(R,T)} = E_a(R,T)$, which may not be true in general. The correct expression is as in \eqref{AneliaExp}.}:
\begin{align}
E_1(R,T) = \min\ppp{E_a(R,T),E_b(R,T)},\label{AneliaExp}
\end{align}
where
\begin{align}
E_a(R,T) \triangleq \min_{(Q,\tilde{Q})\in\hat{\calQ}}\pp{D(\tilde{Q}||P_X\times W)+\;I(Q)-R}\label{AneliaExp1}
\end{align}
and
\begin{align}
E_b(R,T) \triangleq \min_{\tilde{Q}\in\hat{\calL}}\;D(\tilde{Q}||P_X\times W)\label{AneliaExp2}
\end{align}
where $\tilde{Q}$ is a probability distribution on $\calX\times\calY$, and
\begin{align}
&\hat{\calQ}\triangleq\ppp{(Q,\tilde{Q})\in\calD:\;I(Q)\geq R,\;\hat{\Omega}(Q,\tilde{Q})\leq0},\\
&\calD\triangleq\ppp{(Q,\tilde{Q}):\;Q_X = \tilde{Q}_X = P_X,\ Q_Y = \tilde{Q}_Y},\label{calD}\\
&\hat{\Omega}(Q,\tilde{Q})\triangleq \bE_{\tilde{Q}}\log W(Y\vert X) - \bE_{Q}\log W(Y\vert X) -T,
\end{align}
and
\begin{align}
\hat{\calL}&\triangleq \left\{\vphantom{\max_{Q:(Q,\tilde{Q})\in\calD:\;I(Q)\leq R}}\tilde{Q}:\;\bE_{\tilde{Q}}\log W(Y\vert X)\leq R+T\right.\nonumber\\
&\left.+\max_{Q:(Q,\tilde{Q})\in\calD:\;I(Q)\leq R}\;\pp{\bE_Q\log W(Y\vert X)-I(Q)}\right\}.
\end{align}

As a special case, we shall consider in the sequel the problem of universal erasure/list decoding for the BSC, and to this end, we will use the exact expression of $E_1(R,T)$. Accordingly, for the BSC with crossover probability $\theta$, it was shown that \cite[Corollary 2]{exact_erasure}
\begin{align}
E_{1,\text{BSC}}(R,T) = \min\ppp{E_{a,\text{BSC}}(R,T),E_{b,\text{BSC}}(R,T)},\label{BSCane}
\end{align}
where
\begin{align}
&E_{a,\text{BSC}}(R,T)\triangleq \log 2-R\nonumber\\
&\ \ \ \ +\min_{\tilde{q}\in\pp{\theta,\delta_{GV}\p{R}-T/\beta}}\pp{D\p{\tilde{q}||\theta}-h\p{\tilde{q}+\frac{T}{\beta}}},
\end{align}
and 
\begin{align}
E_{b,\text{BSC}}(R,T)\triangleq\min_{\tilde{q}\in\hat{\calL}_{\text{BSC}}}\;D\p{\tilde{q}||\theta},
\end{align}
where $\beta(\theta) \triangleq \log\pp{(1-\theta)/\theta}$, and $\delta_{\text{GV}}(R)$ denote the normalized Gilbert-Varshamov (GV) distance, i.e., the smaller solution, $\delta$, to the equation
\begin{align}
h(\delta) = \log 2 - R,
\end{align}
where $h(\delta)\triangleq -\delta\log\delta - (1-\delta)\log(1-\delta)$ is the binary entropy function, and
\begin{align}
\hat{\calL}_{\text{BSC}}&\triangleq \left\{\vphantom{\max_{q:\;R\geq\log2-h(q)}}\tilde{q}: -\tilde{q}\cdot\beta(\theta)\leq R+T\right.\nonumber\\
&\left.+\max_{q:\;R\geq\log2-h(q)}\pp{-q\cdot\beta(\theta)+h(q)-\log2}\right\}.
\end{align}

\subsection{Unknown Channel}
We now move on to the case of an unknown channel. Consider a family of DMCs 
\begin{align}
\calW_\Theta\triangleq\ppp{W_\theta\p{y\vert x},\ x\in\calX,y\in\calY,\theta\in\Theta},
\end{align}
with a finite input alphabet $\calX$, a finite output alphabet $\calY$, and a matrix of single-letter transition probabilities $\ppp{W_\theta\p{y\vert x}}$, where $\theta$ is a parameter, or the index of the channel in the class, taking values in some set $\Theta$, which may be countable or uncountable. For example, $\theta$ may represent the set of all $\abs{\calX}\cdot\p{\abs{\calY}-1}$ single-letter transition probabilities that define the DMC with the given input and output alphabets. In our problem, the channel is unknown to the receiver designer, and the designer only knows that the channel belongs to the family of channels $\calW_\Theta$, that is, $\theta$ itself is unknown.

When the channel is unknown, the competitive minimax methodology, proposed and developed in \cite{universal_minimax_erasure}, proves useful. Specifically, let $\Gamma_\theta\p{\calC,\calR}$ in \eqref{gammaTh0} designate the above defined Lagrangian, where we now emphasize the dependence on the index of the channel, $\theta$. Similarly, henceforth we shall denote the error exponents in \eqref{errorExpDef} by $E_1(R,T,\theta)$ and $E_2(R,T,\theta)$. Also, let $\bar{\Gamma}_\theta^*\triangleq\bE\ppp{\min_\calR\Gamma_\theta\p{\calC,\calR}}$, which is the ensemble average of the minimum of the above Lagrangian (achieved by Forney's optimum decision rule) w.r.t. the channel $W_\theta\p{y\vert x}$, for a given $\theta$. Note that by Lemma \ref{lem:1}, the exponential order of $\bar{\Gamma}_\theta^*$ is $e^{-n(E_1\p{R,T,\theta}+T)}$. A \emph{competitive minimax decision rule} $\calR$ is one that achieves
\begin{align}
\min_\calR\max_{\theta\in\Theta}\frac{\Gamma_\theta\p{\calC,\calR}}{\bar{\Gamma}_\theta^*},
\end{align}
which is asymptotically equivalent to
\begin{align}
\min_\calR\max_{\theta\in\Theta}\frac{\Gamma_\theta\p{\calC,\calR}}{e^{-n\pp{E_1\p{R,T,\theta}+T}}}.\label{equminmax}
\end{align}
However, as discussed in \cite{universal_minimax_erasure}, such a minimax criterion, of competing with the optimum performance, may be too ambitious, and the value of the minimization problem in \eqref{equminmax} may diverge to infinity for every $R$, as $n\to\infty$. A possible remedy for this situation is to compete with only a fraction $\xi\in\pp{0,1}$ of $E_1\p{R,T,\theta}$, which we would like to choose as large as possible. To wit, we are interested in the competitive minimax criterion
\begin{align}
K_{n}({\cal C})=\min_\calR K_{n}({\cal C},\calR),
\end{align}
in which
\begin{align}
K_{n}({\cal C},\calR)=\max_{\theta\in\Theta}\frac{\Gamma_{\theta}({\cal C},{\cal R})}{e^{-n\left(\xi E_{1}(R,T,\theta)+T\right)}}. \label{Kn(R,C) definition}
\end{align}
Accordingly, for a given rate $R$ and threshold $T$, we wish to find $\xi^*(R,T)$, defined as:
\begin{align} \label{xi star definition}
\xi^*(R,T) \triangleq \sup\ppp{\xi\in\pp{0,1}:\;\limsup_{n\to\infty}\frac{1}{n}\log \bar{K}_n\leq0},
\end{align}
that is, the largest value of $\xi$ such that the ensemble average $\bar{K}_n\triangleq\bE\ppp{K_{n}({\cal C})}$ would \emph{not} grow exponentially fast.

In \cite{universal_minimax_erasure}, the following universal decoding metric was defined 
\[ \label{ def of f}
f(\bx_{m},\by)\triangleq\max_{\theta\in\Theta}\left\{ e^{n\left[\xi E_{1}(R,T,\theta)+T\right]}W_{\theta}(\by|\bx_{m})\right\}, 
\]
and a universal erasure/list decoder was proposed which has the following decision regions
\[
\hat{{\cal R}}_{m}\triangleq\left\{ \by:\frac{f(\bx_{m},\by)}{\sum_{m'\neq m}f(\bx_{m'},\by)}\geq e^{nT}\right\},\ m=1,2,\ldots M, \label{universal_decoder}
\]
and
\[
\hat{{\cal R}}_{0}\triangleq\bigcap_{m=1}^{M}\hat{{\cal R}}_{m}^{c}.
\]
The property that makes $\hat{\calR}\triangleq(\hat{\calR}_0,\hat{{\cal R}}_{1},\ldots,\hat{{\cal R}}_{M})$ interesting is that it was shown in \cite{universal_minimax_erasure}, that it is asymptotically optimal, i.e., for any given $\xi$, $K_{n}({\cal C},\hat{{\cal R}})$ may only be sub-exponentially larger than $K_n(\calC)$. Thus, the largest $\xi$ such that $\bar{K}_n$ is sub-exponential is also attained by $\hat{{\cal R}}$. Hence, in order to find the largest achievable $\xi$, we would like to evaluate exactly the exponential order of $\bE[K_{n}({\cal C},\hat{{\cal R}})]$, as a function of $\xi$. 

We conclude this section with a few remarks:
\begin{enumerate}
\item Note that the results in this paper can be generalized to other random coding ensembles which assign equal probabilities within every type class (for more details see \cite[Section V]{universal_minimax_erasure}). For conceptual simplicity, we confine attention to fixed-composition random coding.
\item We have assumed that the input distribution $P_X$ is fixed, and so the dependence of $\xi^*(R,T)$ in $P_X$ was omitted. While, in essence, the input distribution may be optimized to maximize $\xi^*(R,T,P_X)$ over some set of input distributions (where, for the moment, we make the dependence in $P_X$ explicit), the meaning of the resulting maximal value should be examined very carefully. Specifically, if we maximize over the entire simplex, the resulting $\max_{P_X}\xi^*(R,T,P_X)$ is simply $1$, which is achieved, trivially and uninterestingly, by any input assignment that puts all its mass on a single codeword. Of course the resultant communication system is completely useless. The point is that the minimax criterion is relative (competitive minimax), i.e., it looks at the difference between the ML exponent and the best universally achievable exponent, allowing (among other things) both exponents to be poor.  It seems that any other conceivable approach for universality will also suffer from a difficulty to define a reasonable criterion for a good choice of $P_{X}$.
\item For $T=0$ it can be shown that the exponent \eqref{AneliaExp} coincides with the ordinary random coding exponent. Since the MMI is a universal decoder which achieves the random coding exponent, then clearly any optimal erasure/list universal decoder may only have better exponents, and therefore $\xi^*(R,0)=1$. 
\end{enumerate}

\section{Results}\label{sec:Main}
In this section, our results are presented and discussed. Proofs are relegated to Section \ref{sec:proofs}.
\subsection{Exact formula for the largest achievable fraction}\label{subsec:1}
We start with a few definitions. Let
\begin{align}
&G(R,T,\xi,\tilde{Q}) \triangleq \max_{\theta\in\Theta} \left\{\vphantom{\bE_{\tilde{Q}}}\xi  E_1\p{R,T,\theta}+T\right.\nonumber\\
&\left.\ \ \ \ \ \ \  \ \ \ \ \ \ \ \ \ \ \ \ \ \ \ \ \ \ \ \ \ \ \ \ +\bE_{\tilde{Q}}\log W_\theta(Y\vert X)\right\},\label{Gdef}\\
&\Omega(R,T,\xi,Q,\tilde{Q}) \triangleq G(R,T,\xi,\tilde{Q})-G(R,T,\xi,Q)-T,\label{Odef}
\end{align}
where $E_1\p{R,T,\theta}$ is given in \eqref{AneliaExp}. Finally, let
\begin{align}
\calQ\triangleq \ppp{(Q,\tilde{Q})\in\calD:\;I(Q)\geq R,\ \Omega(R,T,\xi,Q,\tilde{Q})\leq0}\label{Qdef}
\end{align}
and
\begin{align}
\calL&\triangleq\left\{\vphantom{\max_{Q:(Q,\tilde{Q})\in\calD,\;I(Q)\leq R}}\tilde{Q}:G(R,T,\xi,\tilde{Q})\leq R+T\right.\nonumber\\
&\ \ \ +\left.\max_{Q:(Q,\tilde{Q})\in\calD,\;I(Q)\leq R}\pp{G(R,T,\xi,Q)-I(Q)}\right\} .\label{Ldef}
\end{align}
where $\calD$ is defined in \eqref{calD}.
\begin{theorem}\label{Th:1}
Consider the ensemble of fixed composition codes of type $\calT(P_X)$. Then, for any given $T\in\mathbb{R}$ and $R\geq0$, $\xi^*(R,T)$, defined in \eqref{xi star definition}, is equal to the largest number $\xi$ that simultaneously satisfies:
\begin{align}
&\max_{\theta\in\Theta}\left\{\vphantom{\min_{(Q,\tilde{Q})\in\calQ}\ppp{D({\tilde{Q}}||P_X\times W_\theta)+I(Q)-R}}\xi E_1\p{R,T,\theta}-\right.\nonumber\\
&\left.\ \ \ \min_{(Q,\tilde{Q})\in\calQ}\ppp{D({\tilde{Q}}||P_X\times W_\theta)+I(Q)-R}\right\}\leq0,\label{mainEq}
\end{align}
and
\begin{align}
\max_{\theta\in\Theta}\ppp{\xi E_1\p{R,T,\theta}-\min_{\tilde{Q}\in\calL}D({\tilde{Q}}||P_X\times W_\theta)}\leq0.\label{mainEq2}
\end{align}
\end{theorem}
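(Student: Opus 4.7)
The plan is to compute the exact exponential order of $\bar K_n\triangleq\bE\ppp{K_n(\hat\calR,\calC)}$ for the explicit universal decoder $\hat\calR$ (which \cite{universal_minimax_erasure} shows is asymptotically optimal in the minimax sense), and to translate the condition $\limsup_n n^{-1}\log\bar K_n\leq 0$ into single-letter inequalities. Expanding the Lagrangian and taking the ensemble expectation gives
\begin{align*}
\bar K_n\exe \max_{\theta\in\Theta}\ppp{e^{n(\xi E_{1}(R,T,\theta)+T)}\overline{\Pr}\ppp{\calE_2\vert\hat\calR,\theta}+e^{n\xi E_1(R,T,\theta)}\overline{\Pr}\ppp{\calE_1\vert\hat\calR,\theta}},
\end{align*}
so, after a standard polynomial-grid discretization of $\Theta$ via conditional types (as in \cite{universal_minimax_erasure}) that allows $\max_\theta$ to be pulled outside the exponential order, the non-exponential growth condition decouples into
\begin{align*}
\max_{\theta\in\Theta}\pp{\xi E_1(R,T,\theta)+T-\hat E_2(R,T,\xi,\theta)}\leq 0, \quad \max_{\theta\in\Theta}\pp{\xi E_1(R,T,\theta)-\hat E_1(R,T,\xi,\theta)}\leq 0,
\end{align*}
where $\hat E_i(R,T,\xi,\theta)$ is the exact random coding exponent of $\overline{\Pr}\ppp{\calE_i\vert\hat\calR,\theta}$ under $W_\theta$.

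The main technical task is to derive an exact single-letter formula for $\hat E_1(R,T,\xi,\theta)$, which I would do by adapting the type class enumerator methodology of \cite{exact_erasure} to the universal decoder. By construction of $f$ and $G$ one has $n^{-1}\log f(\bx,\by)=G(R,T,\xi,\hat Q_{\bxt\byt})$ exactly, hence the event that a codeword of joint type $\tilde Q$ with $\by$ lies in the decision region of another codeword having joint type $Q$ with $\by$ reduces, up to sub-exponential factors, to the universal-metric inequality $G(R,T,\xi,\tilde Q)\geq T+G(R,T,\xi,Q)$, i.e., to $\Omega(R,T,\xi,Q,\tilde Q)\leq 0$. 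Consequently the constraint $\hat\Omega\leq 0$ in the known-channel analysis of \cite{exact_erasure} is replaced by $\Omega\leq 0$, while the rate constraints on $I(Q)$ that govern the first and second moments of the type class enumerators $N_Q(\by)$ (and hence produce the $\min$-of-two structure $E_1=\min\ppp{E_a,E_b}$) remain unchanged. Repeating the moment computation of \cite{exact_erasure} with $\hat\calQ,\hat\calL$ replaced by $\calQ,\calL$ from \eqref{Qdef}--\eqref{Ldef} yields
\begin{align*}
\hat E_1(R,T,\xi,\theta)=\min\ppp{\min_{(Q,\tilde Q)\in\calQ}\pp{D(\tilde Q\|P_X\times W_\theta)+I(Q)-R},\ \min_{\tilde Q\in\calL}D(\tilde Q\|P_X\times W_\theta)}.
\end{align*}

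To close the argument, I would establish the analog of Lemma \ref{lem:1} for the universal decoder, $\hat E_2(R,T,\xi,\theta)=T+\hat E_1(R,T,\xi,\theta)$: because $\hat\calR$ is defined by exactly the same threshold-test form (with threshold $T$) used in the known-channel case, only with the universal scores $f$ in place of the true likelihoods $W_\theta$, the argument of Appendix \ref{app:1} applies \emph{verbatim} and gives this balance for every fixed $\theta$. With this balance the two decoupled conditions above coincide, and using the identity $\xi E_1-\min\ppp{\hat E_a,\hat E_b}=\max\ppp{\xi E_1-\hat E_a,\xi E_1-\hat E_b}$ followed by interchanging the two outer maxima yields the pair \eqref{mainEq}--\eqref{mainEq2}. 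The most delicate step is the enumerator moment calculation, because $G$ involves a supremum over $\theta$ and is therefore only a piecewise-affine, nonsmooth function of the joint type; the $\Theta$-quantization mentioned in the first paragraph is indispensable for rigorously exchanging this supremum with the type-wise minimizations that arise in the Laplace-type asymptotics.
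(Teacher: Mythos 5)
Your proposal is correct and follows essentially the same route as the paper's own proof: exploit the polynomial cardinality of $\Theta_n$ to reduce $\bar{K}_n$ to the exponential order of the two Lagrangian terms, evaluate the total-error term exactly via the type-class-enumerator method of \cite{exact_erasure} with $G$ replacing the log-likelihood, invoke the universal analog of Lemma~\ref{lem:1} to dispose of the undetected-error term, and read off the two conditions. One minor slip in your exposition: the error event corresponds to $G(R,T,\xi,Q)\geq G(R,T,\xi,\tilde Q)-T$, i.e.\ $\Omega(R,T,\xi,Q,\tilde Q)\leq 0$, not $G(R,T,\xi,\tilde Q)\geq T+G(R,T,\xi,Q)$ as written (which is $\Omega\geq 0$); the sets $\calQ$ and $\calL$ you then use are nevertheless the correct ones.
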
 

Notice that in order to find $\xi^*(R,T)$ one can perform a simple line search over the interval $\pp{0,1}$ using the condition in Theorem \ref{Th:1}. Alternatively, in the following corollary, we also propose an analytical single-letter expression for $\xi^*(R,T)$.
\begin{corollary}\label{cor:1}
Let $\calG(\tilde Q_Y)\triangleq\{Q:\;I(Q)\leq R,\;Q_Y = \tilde Q_Y\}$, and define \eqref{hatXi}-\eqref{hatXi2RT}, shown at the top of the next page, where in the optimizations $Q_X=\tilde Q_X = P_X$.
\begin{figure*}[!t]
\normalsize
\setcounter{MYtempeqncnt}{\value{equation}}
\setcounter{equation}{38} 
\begin{align}
\hat{\xi}_1(R,T,Q,\tilde{Q},\theta,\theta_1,\theta_2,\lambda)&\triangleq\frac{D(\tilde{Q}||P_X\times W_\theta)+I(Q)-R+\lambda\bE_{\tilde{Q}}\pp{\log W_{\theta_1}\p{Y\vert X}}-\lambda\bE_{Q}\pp{\log W_{\theta_2}\p{Y\vert X}}  -\lambda T}{E_1\p{R,T,\theta}-\lambda E_1\p{R,T,\theta_1} +\lambda E_1\p{R,T,\theta_2}},\label{hatXi}\\
\xi_1^*\p{R,T}&\triangleq \min_{\theta\in\Theta}\min_{(\tilde{Q},Q)\in\calD,\;Q\in\calG^c(\tilde Q_Y)}\max_{\lambda\geq0}\max_{\theta_1\in\Theta}\min_{\theta_2\in\Theta}\;\hat{\xi}_1(R,T,Q,\tilde{Q},\theta,\theta_1,\theta_2,\lambda),\label{hatXiRT}\\
\hat{\xi}_2(R,T,Q,\tilde{Q},\theta,\theta_1,\theta_2,\lambda)&\triangleq\frac{D(\tilde{Q}||P_X\times W_\theta)+\lambda\bE_{\tilde{Q}}\pp{\log W_{\theta_1}\p{Y\vert X}}-\lambda\bE_{Q}\pp{\log W_{\theta_2}\p{Y\vert X}}  -\lambda \pp{R+T-I(Q)}}{E_1\p{R,T,\theta}-\lambda E_1\p{R,T,\theta_1} +\lambda E_1\p{R,T,\theta_2}},\label{hatXi2}\\
\xi_2^*\p{R,T}&\triangleq \min_{\theta\in\Theta}\min_{\tilde Q}\max_{\lambda\geq0}\max_{\theta_1\in\Theta}\min_{Q\in\calG(\tilde Q_Y)}\min_{\theta_2\in\Theta}\;\hat{\xi}_2(R,T,Q,\tilde{Q},\theta,\theta_1,\theta_2,\lambda),\label{hatXi2RT}
\end{align}
\hrulefill
\vspace*{4pt}
\end{figure*}
%
%
%
Then, 
\begin{align}
\xi^*(R,T)=\min\ppp{\xi_1^*\p{R,T},\xi_2^*\p{R,T}}.
\end{align}
\end{corollary}

For the special case of the BSC, one can simplify the above minimization problems over the joint distributions $(Q,\tilde{Q})$, and obtain instead a one-dimensional minimization problem. Indeed, consider the family of BSCs where the unknown crossover probability $\theta$ belongs to $\Theta = \pp{0,1}$. Recall that (c.f. end of Subsection \ref{subsec:back1}) $\beta(\theta) = \log\pp{(1-\theta)/\theta}$. Define
\begin{align}
\phi(\theta) &\triangleq \frac{\xi E_1\p{R,T,\theta}+\log(1-\theta)+T}{\beta(\theta)}\nonumber\\
&\ \ \ \ -\frac{\max_{\theta'}\ppp{\xi E_1\p{R,T,\theta'}+\log(1-\theta')-\beta(\theta')\cdot\tilde{q}}}{\beta(\theta)}
\end{align}
and 
\begin{align}
q_1^*&\triangleq \max_{\theta\leq1/2}\phi(\theta),\label{q1Def}\\
q_2^*&\triangleq \min_{\theta>1/2}\phi(\theta).\label{q2Def}
\end{align}
Finally, let
\begin{align}g\p{q_1^*,q_2^*}\triangleq
\begin{cases}
\log 2,\ \ \ \ \ \ \ \ \ \ \text{if}\;q_1^*>1/2,\;\text{or},\;q_2^*<1/2,\\
\max\ppp{h\p{q_1^*},h\p{q_2^*}},\ \ \ \ \ \text{otherwise}
\end{cases}\label{gDef}
\end{align}
and
\begin{align}
&\calL_{\text{BSC}}\triangleq \left\{\tilde{q}: \max_{0\leq\theta\leq1}\pp{\xi E_1(R,T,\theta)-\tilde{q}\cdot\beta(\theta)+\log\theta}\leq R\right.\nonumber\\
&\ \ \ \ \ \ \ \ +T+\max_{0\leq\theta\leq1}\left[\xi E_1(R,T,\theta)-\max\ppp{\theta,\delta_{\text{GV}}(R)}\cdot\beta(\theta)\right.\nonumber\\
&\left.\left.\ \ \ \ \ \ \ \ +\log\theta+h(\max\ppp{\theta,\delta_{\text{GV}}(R)})-\log2\right]\vphantom{\max_{0\leq\theta\leq1}}\right\}.
\end{align}
We have the following result.
\begin{corollary}\label{cor:2}
Consider a family of BSCs, where the unknown crossover probability $\theta$ belongs to $\Theta = \pp{0,1}$, and with fixed composition codes of type $P_X=(1/2,1/2)$. Then, $\xi^*(R,T)$ is equal to the largest number $\xi$ that simultaneously satisfies:
\begin{align}
&\max_{0\leq\theta\leq1}\left\{\vphantom{\min_{\tilde{q}}\pp{D\p{\tilde{q}||\theta}+\abs{-g\p{q_1^*,q_2^*}+\log 2-R}^+}}\xi\cdot E_{1,\text{BSC}}(R,T,\theta)\right.\nonumber\\
&\left.-\min_{\tilde{q}}\pp{D\p{\tilde{q}||\theta}+\abs{-g\p{q_1^*,q_2^*}+\log 2-R}^+}\right\}\leq0,
\end{align}
and
\begin{align}
&\max_{0\leq\theta\leq1}\left\{\xi\cdot E_{1,\text{BSC}}(R,T,\theta)-\min_{\tilde{q}\in\calL_{\text{BSC}}}\;D\p{\tilde{q}||\theta}\right\}\leq0,
\end{align}
where $E_{1,\text{BSC}}(R,T,\theta)$ is given in \eqref{BSCane}. 
\end{corollary}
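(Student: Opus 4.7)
The plan is to specialize Theorem~\ref{Th:1} to the BSC with $P_X=(1/2,1/2)$ and exploit its symmetry to reduce each minimization over joint distributions $(Q,\tilde Q)\in\calD$ to a one-parameter minimization over scalar crossovers. The reduction mirrors the one used in Corollary~2 of \cite{exact_erasure} for the known-channel BSC: since $P_X$ and $W_\theta$ are both invariant under the input/output swap $0\leftrightarrow 1$, the objective and constraints in \eqref{mainEq}--\eqref{mainEq2} are invariant under the same swap applied jointly to $(Q,\tilde Q)$, and by restricting to the swap-invariant (``symmetric'') pairs one loses nothing. A symmetric $Q$ with $Q_X=(1/2,1/2)$ is determined by the scalar $q\triangleq Q(X\neq Y)$, automatically has $Q_Y=(1/2,1/2)$, and analogously for $\tilde Q$. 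In this scalar form,
\begin{align}
I(Q)&=\log 2-h(q),\\
\bE_Q\log W_\theta(Y\vert X)&=\log(1-\theta)-q\,\beta(\theta),\\
D(\tilde Q\Vert P_X\times W_\theta)&=D(\tilde q\Vert \theta),
\end{align}
so that $G(R,T,\xi,q)=\max_\theta\ppp{\xi E_1(R,T,\theta)+T+\log(1-\theta)-q\beta(\theta)}$, with $E_1(R,T,\theta)$ now given by \eqref{BSCane}.

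Next, I would translate the constraint $\Omega(R,T,\xi,q,\tilde q)\leq 0$ into a feasibility set for $q$. It reads $G(R,T,\xi,q)\geq G(R,T,\xi,\tilde q)-T$, and by definition of $G$ as a maximum over $\theta$, this is equivalent to the existence of some $\theta$ with $q\,\beta(\theta)\leq \phi(\theta)\beta(\theta)$, where $\phi$ is precisely as defined in the corollary. Splitting by the sign of $\beta(\theta)$ (positive for $\theta<1/2$, negative for $\theta>1/2$) converts this existence statement into ``$q\leq q_1^*$ or $q\geq q_2^*$.''

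For condition \eqref{mainEq}, the objective $D(\tilde q\Vert\theta)+I(Q)-R$ is additively separable in $\tilde q$ and $q$, the only coupling coming through $\Omega$. For fixed $\tilde q$, the $q$-minimization reduces to maximizing $h(q)$ on the intersection of the $\Omega$-feasible set and the rate-feasible set $\ppp{q\leq \delta_{\text{GV}}(R)}\cup\ppp{q\geq 1-\delta_{\text{GV}}(R)}$ induced by $I(Q)\geq R$. Since $h$ is monotone on each of $[0,1/2]$ and $[1/2,1]$, the supremum of $h$ over the $\Omega$-set alone equals exactly $g(q_1^*,q_2^*)$, with the two cases in \eqref{gDef} corresponding to whether $q=1/2$ itself is admissible. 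Intersecting with the rate-feasible set caps the supremum by $\log 2-R$, yielding the closed-form minimum $\abs{\log 2-R-g(q_1^*,q_2^*)}^+$. Plugging back and performing the outer $\max_\theta$ and $\min_{\tilde q}$ delivers the first displayed condition of the corollary.

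For condition \eqref{mainEq2}, I would simplify $\calL$ by interchanging the outer $\max_q$ in its defining inner maximization with the $\max_\theta$ hidden in $G$, reducing the inner problem to $\max_{q\in[\delta_{\text{GV}}(R),1-\delta_{\text{GV}}(R)]}[h(q)-q\beta(\theta)]$. This is concave in $q$ with unconstrained optimum at $q=\theta$, so after invoking the BSC symmetry $E_1(R,T,\theta)=E_1(R,T,1-\theta)$ to restrict attention to $\theta\leq 1/2$, the constrained maximizer becomes $\max\ppp{\theta,\delta_{\text{GV}}(R)}$, and direct substitution yields $\calL_{\text{BSC}}$. The hardest step will be the symmetrization argument that underlies the scalar reduction, together with the verification that the $\abs{\cdot}^+$ formula for \eqref{mainEq} is tight: the latter requires checking that, whenever $g(q_1^*,q_2^*)>\log 2-R$, a minimizer $q\in\ppp{\delta_{\text{GV}}(R),1-\delta_{\text{GV}}(R)}$ is still $\Omega$-feasible, which entails a case analysis on the relative positions of $q_1^*,q_2^*,\delta_{\text{GV}}(R)$, and $1/2$.
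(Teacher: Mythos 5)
Your proposal follows essentially the same path as the paper's proof: reduce to scalar crossover probabilities via the BSC symmetry, translate $\Omega\leq 0$ into the two-sided threshold ``$q\leq q_1^*$ or $q\geq q_2^*$'' by splitting on the sign of $\beta(\theta)$, absorb $I(Q)\geq R$ into an $\abs{\cdot}^+$ to obtain $\abs{\log 2-R-g(q_1^*,q_2^*)}^+$, and for $\calL_{\text{BSC}}$ interchange the maxima over $q$ and $\theta$ to identify the maximizer $\max\ppp{\theta,\delta_{\text{GV}}(R)}$. The one place to firm up is the scalar reduction itself: a bare ``restrict to swap-invariant $(Q,\tilde Q)$'' is not automatic, since $\ppp{Q:I(Q)\geq R}$ is the complement of a convex set and the symmetrized $Q$ need not remain feasible; the paper instead first rewrites the inner $Q$-minimum with the $\abs{\cdot}^+$ and then invokes the entropy inequality $H_{X\vert Y}(Q)\leq H_{\calI\ppp{X\neq Y}}(Q)=h(Q(X\neq Y))$ (with equality for the symmetric $Q_{X\vert Y}$), which makes the reduction a clean two-step bound-and-achieve argument rather than a symmetry claim.
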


\subsection{Discussion and Comparison with Previous Results}
While in this work we have derived the exact maximal achievable $\xi^{*}(R,T)$ for fixed composition coding of type $P_{X}$, in \cite[Theorem 2]{universal_minimax_erasure}, Merhav and Feder have obtained the following lower bound \cite[Theorem 2]{universal_minimax_erasure}:
\begin{align}
&\xi^{*}(R,T)\geq\xi_{L}(R,T)\triangleq\nonumber\\
&\min_{(\theta,\theta'')\in\Theta^{2}}\max_{0\leq s\leq\rho\leq1}\frac{E(\theta,\theta'',s,\rho)-\rho R-sT}{(1-s)E_{1}(R,T,\theta)+sE_{1}(R,T,\theta'')}\label{NeriMe}
\end{align}
where 
\begin{align}
&E(\theta,\tilde{\theta},s,\rho)\triangleq\nonumber\\
&\min_{Q_{Y}}\left[F(Q_{Y},1-s,\theta)+\rho F(Q_{Y},s/\rho,\theta'')-H(Q_{Y})\right]
\end{align}
and 
\begin{align}
&F(Q_{Y},1-s,\theta)\triangleq\nonumber\\
&\min_{Q_{X|Y}:\:\left(Q_{Y}\times Q_{X|Y}\right)_{X}=P_{X}}\left[I(Q)-\lambda\mathbb{E}{}_{Q}\log W_\theta(Y\vert X)\right].
\end{align}
Before we continue, we remark that in \cite{universal_minimax_erasure}, Forney's lower bound on $E_{1}(R,T,\theta)$ was used instead of its exact value as derived in \cite{exact_erasure}, but for the sake
of comparison any exponent can be used, and specifically, the exact exponent. Now, note that an alternative (equivalent) representation of $\xi_{L}(R,T)$ in \eqref{NeriMe} is that it is given by the largest $\xi$ such that for any pair $(\theta,\theta'')\in\Theta^{2}$
\begin{align}
\max_{0\leq s\leq\rho\leq1}E(\theta,\theta'',s,\rho)&-\rho R-sT-\xi\left[(1-s)E_{1}(R,T,\theta)\right.\nonumber\\
&\left.\ \ \ \ \ \ \ +sE_{1}(R,T,\theta'')\right]\geq0.
\end{align}
Straightforward algebraic manipulations show that the last inequality can be rewritten
as 
\begin{equation}
\max_{0\leq s\leq\rho\leq1}\min_{(Q,\tilde{Q})\in{\cal D}}\Psi(R,T,\theta,\theta,\theta'',Q,\tilde{Q},\rho,s)\geq0\label{eq: xi achievable condition Neri}
\end{equation}
where
\begin{align}
&\Psi(R,T,\theta,\theta',\theta'',Q,\tilde{Q},\rho,s,\xi)\triangleq D(\tilde{Q}||P_{X}\times W_{\theta})\nonumber\\
&+\rho\left[I(Q)-R\right]+s\cdot\left[\mathbb{E}{}_{\tilde{Q}}\log W_{\theta'}(Y\vert X)+\xi E_{1}(R,T,\theta')\right.\nonumber\\
&\left.\vphantom{\mathbb{E}{}_{\tilde{Q}}\log W_{\theta'}(Y\vert X)}-\mathbb{E}{}_{Q}\log W_{\theta''}(Y\vert X)-\xi E_{1}(R,T,\theta'')-T\right]-\xi E_{1}(R,T,\theta).
\end{align}
For any given $(\theta,\theta'')\in\Theta^{2}$, and $(s,\rho)$, $\Psi(R,T,\theta,\theta',\theta'',Q,\tilde{Q},\rho,s,\xi)$ is convex in\footnote{The input distributions of both $Q$ and $\tilde{Q}$ are assumed fixed to $P_X$, and we are essentially only optimizing over the conditional distributions $(Q_{Y|X},\tilde{Q}_{Y| X})$.} $(Q,\tilde{Q})$, and for a given $(Q,\tilde{Q})$, it is linear (and hence concave) in $(s,\rho)$ . Thus, the minimax theorem implies that \eqref{eq: xi achievable condition Neri} is equivalent to 
\begin{equation}
\min_{(\theta,\theta'')\in\Theta^{2}}\min_{(Q,\tilde{Q})\in{\cal D}}\max_{0\leq s\leq\rho\leq1}\Psi(R,T,\theta,\theta,\theta'',Q,\tilde{Q},\rho,s,\xi)\geq0.\label{eq: xi achievable condition Neri ver 2}
\end{equation}
On the other hand, the exact value of $\xi^{*}(R,T)$ in Theorem \ref{Th:1} is determined by the two conditions \eqref{mainEq}-\eqref{mainEq2}. In what follows, we shall concentrate on the first condition in \eqref{mainEq}, as this condition can be compared to \eqref{eq: xi achievable condition Neri ver 2}. Thus, let us focus on the case in which the condition in \eqref{mainEq} is more stringent than the condition in \eqref{mainEq2}. Then, according to \eqref{mainEq}, a fraction $\xi$ is achievable if 
\begin{equation}
\min_{\theta\in\Theta}\min_{(Q,\tilde{Q})\in{\cal D}}D(\tilde{Q}||P_{X}\times W_{\theta})+I(Q)-R-\xi E_{1}(R,T,\theta)\geq0\label{eq: xi achievable condition ver 1}
\end{equation}
where the minimum over $(Q,\tilde{Q)}$ is such that $I(Q)\geq R$ and $\Omega(R,T,\xi,Q,\tilde{Q})\leq0$. Now, the optimization problem in \eqref{eq: xi achievable condition ver 1} is equivalent to
\begin{align}
&\min_{\theta\in\Theta}\min_{(Q,\tilde{Q})\in{\cal D}}\max_{\rho'\geq0}\max_{s\geq0}\;\left[D(\tilde{Q}||P_{X}\times W_{\theta})-\xi E_{1}(R,T,\theta)\right.\nonumber\\
&\left.\ \ \ \ +(1-\rho')\left[I(Q)-R\right]+s\Omega(R,T,\xi,Q,\tilde{Q})\right]\geq0,
\end{align}
or by letting $\rho=1-\rho'$ we get
\begin{align}
&\min_{\theta\in\Theta}\min_{(Q,\tilde{Q})\in{\cal D}}\max_{\rho\leq1}\max_{s\geq0}\left[D(\tilde{Q}||P_{X}\times W_{\theta})+\rho\left[I(Q)-R\right]\right.\nonumber\\
&\left.\ \ \ \ \ \ \ \ \ \ \ \ +s\Omega(R,T,\xi,Q,\tilde{Q})-\xi E_{1}(R,T,\theta)\right]\geq0,\label{eq: xi achievable condition ver 2}
\end{align}
which is equivalent to
\begin{align}
&\min_{\theta\in\Theta}\min_{(Q,\tilde{Q})\in{\cal D}}\max_{\rho\leq1}\max_{s\geq0}\max_{\theta'\in\Theta}\nonumber\\
&\ \ \ \ \ \ \ \ \ \ \min_{\theta''\in\Theta}\Psi(R,T,\theta,\theta',\theta'',Q,\tilde{Q},\rho,s,\xi)\geq0.
\end{align}
Moreover, for a given $(\theta,Q,\tilde{Q})$, we may write 
\begin{align}
&\max_{\rho\leq1}\max_{s\geq0}\max_{\theta'\in\Theta}\min_{\theta''\in\Theta}\Psi(R,T,\theta,\theta',\theta'',Q,\tilde{Q},\rho,s,\xi)\nonumber\\
&=\min_{\theta''\in\Theta}\max_{\theta'\in\Theta}\max_{0\leq\rho\leq1}\max_{s\geq0}\Psi(R,T,\theta,\theta',\theta'',Q,\tilde{Q},\rho,s,\xi),
\end{align}
because under the constraint $s\geq0$, the inner minimization over $\theta''\in\Theta$ does not depend on the value of $(\rho,s,\theta')$: it is simply the $\theta''\in\Theta$ which maximizes $\mathbb{E}{}_{Q}\log W_{\theta''}(Y\vert X)+\xi E_{1}(R,T,\theta'')$ \footnote{If for a given real function $f(u,v)$ the minimizer $v^*$ w.r.t. $v$ does not depend on $u$, then $\max_{u\in\calU}\min_{v\in\calV}f(u,v)=\max_{u\in\calU}f(u,v^{*})\geq\min_{v\in\calV}\max_{u\in\calU}f(u,v)$, and the minimax inequality results $\max_{u\in\calU}\min_{v\in\calV}f(u,v)=\min_{v\in\calV}\max_{u\in\calU}f(u,v)$, assuming that $\calU$ and $\calV$ are two independent sets (i.e., rectangular).}. Thus, the resulting condition is
\begin{align}
&\min_{(\theta,\theta'')\in\Theta^{2}}\min_{(Q,\tilde{Q})\in{\cal D}}\max_{0\leq\rho\leq1}\max_{s\geq0}\nonumber\\
&\ \ \ \ \ \ \ \ \ \ \ \max_{\theta'\in\Theta}\Psi(R,T,\theta,\theta',\theta'',Q,\tilde{Q},\rho,s,\xi)\geq0.\label{eq: xi achievable condition ver 4}
\end{align}
By comparing the condition in \eqref{eq: xi achievable condition ver 4} to the condition of the lower bound of \cite{universal_minimax_erasure} in \eqref{eq: xi achievable condition Neri ver 2}, the following differences are observed:
\begin{enumerate}
\item In \eqref{eq: xi achievable condition Neri ver 2} an additional constraint
$s\leq\rho$ is imposed.
\item In \eqref{eq: xi achievable condition Neri ver 2} a sub-optimal choice
of $\theta'=\theta$ is imposed.
\end{enumerate}
Accordingly, these differences may cause the value of the minimax in \eqref{eq: xi achievable condition Neri ver 2} to be lower than the value of the optimization problem in \eqref{eq: xi achievable condition ver 4}, which results in a lower achievable $\xi$ compared to $\xi^{*}(R,T)$, as one should expect. Next, we provide two examples, where in the first one these differences are immaterial, and in the second one they do matter. The former happens when the optimal solution in \eqref{eq: xi achievable condition ver 4}, denoted by $(\theta^{*},\theta''^{*},Q^{*},\tilde{Q}^{*},\rho^{*},s^{*})$, satisfies $s^{*}\leq\rho^{*}$, and the maximizer of $\mathbb{E}_{\tilde{Q}^{*}}\log W_{\theta'}(Y\vert X)+\xi_{L}(R,T)\cdot E_{1}(R,T,\theta')$ is given by $\theta^*$. Accordingly, in this case, the value of \eqref{eq: xi achievable condition ver 4} equals to \eqref{eq: xi achievable condition Neri ver 2}. Since, in addition, in this example, the condition in \eqref{mainEq} is more stringent than the condition in \eqref{mainEq2}, we obtain $\xi^{*}(R,T)=\xi_{L}(R,T)$. The conclusion that stems from this observation is that, in this case, the analysis in \cite{universal_minimax_erasure} is tight. 
\begin{example}\label{exmp:1}
In \cite{universal_minimax_erasure}, a family of BSCs was considered where $\theta\in\Theta$ designates the cross-over probability of the BSC, and $\Theta=\{0,1/100,2/100,\ldots,1\}$. The values of $\xi_{L}(R,T)$ were computed for various values of $R$ and $T$. It was assumed that $T\geq0$, which means that the decoder operates in the erasure mode. Numerical calculations of the bound derived in this work (and the exact formula), result in exactly the same values as given in \cite[Table 1]{universal_minimax_erasure}, and so in all these cases, the analysis of \cite{universal_minimax_erasure} was sufficient to provide tight results. For example, for $(R,T) = (0.05,0.15)$, and codebook type $P_{X}=(1/2,1/2)$, we obtain $\xi_{L}(R,T)=0.495$. Also, the two worst case channels (i.e., the solutions to \eqref{eq: xi achievable condition ver 4}) are $\theta^{*}=0.18$ and $\theta''^{*}=0.22$ while $\theta'^{*}=\theta^{*}$ and $\rho^{*}=0.36>s^{*}=0.185$. So, since $s^{*}<\rho^{*}$ and $\theta'^{*}=\theta^{*}$, the discussion above implies that a tight result is obtained, that is, $\xi^{*}(R,T)=\xi_{L}(R,T)=0.495$. Thus, in the worst case over all $\theta\in\Theta$, the exponent $\hat{E}_1(R,T,\theta)$ is not less than $0.495\cdot E_1(R,T,\theta)$.
\end{example}

Since $\xi^*(R,T)<1$ for some $R$ and $T$, we arrive at the following conclusion: \emph{In general, in the random coding regime of erasure/list decoding, there is no universal decoder which achieves the same error exponent as Forney's decoder for every channel in the class.} This fact is in contrast to ordinary decoding, in which the MMI decoder achieves the exact same error exponent as the ML decoder. In this sense, knowledge of the channel is crucial when erasure/list options are allowed. The possible difficulty of universalizing an erasure decoder is apparent for the BSC: While for ordinary decoding, the optimal detector depends only on whether $\theta\leq1/2$ or $\theta>1/2$ (i.e., minimum distance versus maximum distance decoders, respectively), and thus rather easy to universalize, the optimal erasure decoder depends on the exact value of $\theta$.

Nonetheless, in general, we might have that $\xi_{L}(R,T)$ is strictly less than $\xi^{*}(R,T)$. Again, assume that the condition in \eqref{mainEq} dominates \textbf{$\xi^{*}(R,T)$}. To provide intuition, notice that in \eqref{eq: xi achievable condition ver 4} \emph{triplets }$(\theta,\theta',\theta'')\in\Theta^{3}$ are optimized, in contrast to \eqref{eq: xi achievable condition Neri ver 2}, where only \emph{pairs} of channels $(\theta,\theta'')\in\Theta^{2}$ are optimized. Thus, for a family of only two channels, namely, $|\Theta|=2$, typically (but not necessarily) the second difference above, of imposing the constraint $\theta'=\theta$, is immaterial. Then, the only difference between the conditions in \eqref{eq: xi achievable condition Neri ver 2} and \eqref{eq: xi achievable condition ver 4} is the constraint $s\leq\rho$. Let us assume that this is indeed the case, and let us notice that $s$ can be thought as a Lagrange multiplier for the constraint 
\begin{align}
&\mathbb{E}{}_{\tilde{Q}}\log W_{\theta'}(Y\vert X)+\xi E_{1}(R,T,\theta')-\mathbb{E}{}_{Q}\log W_{\theta''}(Y\vert X)\nonumber\\
&\ \ \ \ \ \ \ \ \ \ \ \ \ \ -\xi E_{1}(R,T,\theta'')-T\leq0.\label{eq: xi achievable condition optimization Omega constraint}
\end{align}
Now, if the constraint, at the optimal solution, is slack, then the optimal Lagrange multiplier is $s^{*}=0$. In this case, the constraint $s\leq\rho$ is immaterial and so \eqref{eq: xi achievable condition Neri ver 2} and \eqref{eq: xi achievable condition ver 4} are exactly the same. However, as we shall see in the sequel, it is possible that $s^{*}>\rho^{*}$ in \eqref{eq: xi achievable condition ver 4}, and then the values of the objective in \eqref{eq: xi achievable condition Neri ver 2} and \eqref{eq: xi achievable condition ver 4} are different. Observing \eqref{eq: xi achievable condition optimization Omega constraint}, it is apparent that as $T$ decreases, and especially in the list mode of $T<0$, the optimal $s^{*}$ of \eqref{eq: xi achievable condition ver 4} increases, perhaps beyond the optimal $\rho^{*}$. Thus, if both $s^{*}>\rho^{*}$ and the condition in \eqref{mainEq} dominates \textbf{$\xi^{*}(R,T)$}, we get that $\xi_{L}(R,T)<\xi^{*}(R,T)$. The following example provides such a simple case. We remark, that such a phenomenon was already observed in a Slepian-Wolf erasure/list decoding scenario, for a known source \cite{NeriSW}. There too, in the list regime of $T<0$, there is a gap between the Forney-style bound and the exact random binning error exponents. 
\begin{example}
Consider a family of two BSCs, where $\Theta=\{0.1,0.15\}$, and a type $P_{X}=(1/2,1/2)$ for the random fixed composition codebook. We take $(R,T) = (0.4,-0.25)$, and since $T<0$, the decoder operates in the list mode. We obtain that $\xi_{L}(R,T)=0.716$ which is strictly less than $\xi^{*}(R,T)=0.727$. In the optimization problem \eqref{eq: xi achievable condition Neri ver 2}, the optimal values are $\rho^{*}=s^{*}=0.231$, while if the constraint $s\leq\rho$ is relaxed, then the optimal values are $s=0.231>\rho=0.217$. The resulting value of the optimization problem is exactly $0.727$, just as $\xi^{*}(R,T)$. Moreover, for this example, the largest achievable $\xi$ which satisfies condition \eqref{mainEq} is the same for condition \eqref{mainEq2}. While the difference between $\xi_{L}(R,T)$ and $\xi^{*}(R,T)$ is not very large, it is nevertheless existent and in more intricate scenarios, the differences might be more significant. 
\end{example}


\section{Decoding With Training}\label{sec:training}

Usually, in practical communication systems with channel uncertainty, a portion of the blocklength is devoted to a training sequence which is common to all codewords. This sequence is aimed for learning the unknown channel. In this section, we will first define random coding ensembles which incorporate a training sequence. Then, we shall propose and compare two decoders for this scenario: the (asymptotically optimal) universal decoder in \eqref{universal_decoder}, and a ``plug-in" decoder, which first estimates the channel using the training sequence, and then decodes the remaining symbols of the codeword using the estimated channel.

\subsection{Definition of training ensembles}
For reasons that will be clear in the sequel, we consider two variants of an ensemble which incorporate a training sequence. In the first ensemble, we fix a portion\footnote{As discussed in \cite[Appendix I]{FerderLapidoth}, achieving the random coding error exponent when using a plug-in decoder with a training sequence of length $\alpha_{n}n$ such that $\alpha_{n}\to0$ is not possible, even for ordinary decoding. In a nutshell, the error exponent of the plug-in decoder is not degraded by the estimation error of the channel only when the length of the training sequence is a linear function of $n$. For this reason, we consider a training sequence of length $\alpha n$, where $\alpha$ is a constant fraction.} $\alpha\in\left[0,1\right)$ of the blocklength $n$. Then, a training sequence\footnote{Henceforth, over-bar will indicate quantities which are related to the training part.} $\bar{\bx}\in\calX^{\alpha n}$ is chosen\footnote{For brevity, integer constraints will be omitted.} within type $\bar P_{X}$, and  $M= e^{ nR}$ codewords $\tilde\bx_m\in\calX^{(1-\alpha) n}$, $m=1,2,\ldots,M$, are selected at random, uniformly within a type class $\calT(P_X)$ for some given random coding distribution $P_X$ over the alphabet $\calX$. The transmitted codewords are then the concatenations $\bx_m=\left(\bar{\bx},\tilde\bx_m\right)$ for $1\leq m \leq M$. In the second ensemble, the blocklength of $M= e^{ nR}$ codewords $\tilde\bx_m\in\calX^{n}$, $m=1,2,\ldots,M$, remains $n$, but the codewords are prefixed with a training sequence of length $\beta n$, where $\beta\geq0$. The later ensemble leads of course to a reduction of the effective rate to $R_{\text{eff}} = R/(1+\beta)$. Since the channel is a DMC, it can be easily verified that only the type of the training sequence $\bar P_{X}$ will affect performance, but not the particular sequence within the type class $\calT(\bar P_{X})$. Evidently, when $\alpha=0$ or $\beta=0$, we revert to the ordinary random coding ensemble. Finally, it is important to emphasize that there is an inherent trade-off in using training (i.e., taking $\alpha>0$ or $\beta>0$): learning time comes at the expense of effective blocklength and vice-versa.

\subsection{Universal decoder}\label{subsec:unidecoder}
Whenever $\xi^*(R,T)<1$, one can hope to improve $\xi^*(R,T)$ by using the training ensemble\footnote{In this subsection, we will describe our results only for the first ensemble (defined by $\alpha$), but similar results can be readily derived for the second ensemble (defined by $\beta$).} defined above with $\alpha>0$, along with the \emph{asymptotically optimal decoder} in \eqref{universal_decoder}. That is, even though the first $\alpha n$ symbols are the same for all codewords, the decoder computes the metric $f(\bx_{m},\by)$ for the entire codeword. With a slight abuse of notation, we may denote the maximal fraction achieved by this decoder as $\xi^*(R,T,\alpha,\bar P_X)$, for $\alpha\in[0,1)$, and then $\alpha=0$ corresponds to the ordinary fixed-composition ensemble, considered in Subsection \ref{subsec:1}. The methods used to prove Theorem \ref{Th:1}, can be generalized to obtain $\xi^*(R,T,\alpha,\bar P_X)$,  and in Appendix \ref{app:2}, a closed-form formula for $\xi^*(R,T,\alpha,\bar P_X)$, with a proof outline, are provided. Nonetheless, we suspect that, in fact, $\xi^*(R,T,\alpha,\bar P_X)$ cannot be improved in this way, namely, choosing $\alpha=0$ is optimal.

To gain intuition for the explanation of this phenomena, we focus on two codewords only, $\tilde\bx_1$ and $\tilde\bx_2$, of length $(1-\alpha)n$. In ordinary decoding for a known channel, the decision on the decoded codeword is made only on the basis of the \emph{order} between the likelihoods of both codewords, i.e., $W_\theta(\tilde\by|\tilde\bx_{1})\lessgtr W_\theta(\tilde\by|\tilde\bx_{2})$. On the other hand, in erasure/list decoding for a known channel, the actual likelihood \emph{values} are of importance due to the multiplication of the competing likelihood by $e^{(1-\alpha)nT}$ (recall that, for example,  the first codeword is selected only if $W_\theta(\tilde\by|\tilde\bx_{1})> e^{(1-\alpha)nT}\cdot W_\theta(\tilde\by|\tilde\bx_{2})$). Now, if a common prefix (training sequence) is added to both codewords (and transmitted over the channel), clearly the likelihood of the first part $W_\theta(\bar\by|\bar\bx)$ is the same for both codewords. Let the combined codewords be $\bx_1=(\bar\bx,\tilde\bx_1)$ and $\bx_2=(\bar\bx,\tilde\bx_1)$, and the combined channel output be $\by=(\bar\by,\tilde\by)$. Then, while the order between the combined likelihoods is preserved $W_\theta(\by|\bx_{1})\lessgtr W_\theta(\by|\bx_{2})$, as the blocklength is now $n$ and not $(1-\alpha)n$, the ratio between the values of the two likelihoods (or its inverse), now has to exceed $e^{nT}$, rather than the smaller value of $e^{(1-\alpha)nT}$, so that erasure will not be decided. 

This occurs also in the case of an unknown channel, namely, for the universal decoder in \eqref{universal_decoder}, and in the extreme cases for which $\alpha$ is close to $1$, it may happen that only erasures are decided, which leads to a zero total error exponent. For small and moderate values of $\alpha$ the total error exponent may not be zero, but is still nonetheless worse than the exponent achieved with $\alpha=0$. For the family of BSCs in Example \ref{exmp:1}, we have numerically verified that $\alpha=0$ for all rates and thresholds. We conjecture that this holds for more general families of channels.

\subsection{Plug-in decoder}
A possible practical decoder (termed \emph{``plug-in" decoder}), for the training ensembles defined, works in two stages: First, the decoder estimates the channel using the known training sequence, and then uses this estimated channel in place of the true (unknown) channel in using Forney's decoder \eqref{optDD1}-\eqref{optDD2}, for the remaining symbols of the codeword. This sub-optimal decoder, and the competitive minimax decoder in \eqref{universal_decoder}, are two extremes. Indeed, the decoder in \eqref{universal_decoder} achieves $\xi^*(R,T)$ but may have rather high implementation complexity. The plug-in decoder, on the one hand, has smaller complexity, and thus can be more easily incorporated into practical systems\footnote{If, e.g., the code has some structure and the decoder for a known channel can be implemented for any $\theta\in\Theta$, then the plug-in decoder for an unknown channel only requires an additional estimation step.}, but on the other hand, achieves only some $\xi^e(R,T)\leq\xi^*(R,T)$ (to be rigorously defined in the sequel). Therefore, if $\xi^e(R,T)\ll\xi^*(R,T)$ then there is substantial motivation to use the more complex decoder \eqref{universal_decoder}. If, however, $\xi^e(R,T)\approx\xi^*(R,T)$ then the plug-in decoder is sufficient to almost achieves the optimal performance, while still keeping a reasonable implementation complexity. In this subsection, we analyze the competitive minimax performance of the plug-in decoder.

As mentioned above, the training part, $\bar\bx$, shall be used by the decoder to estimate the channel (this is the first stage). Let us split the output vector $\by$ into two parts $(\bar\by, \tilde\by)$, where the first part corresponds to the training. The channel estimator is a function $\hat{\theta}(\bar Q)\in\Theta$, where $\bar Q=\hat Q_{\bar\bxt \bar\byt}$. Then, in the second stage, optimal decoding (for a known channel) is employed for the remaining symbols of the codeword, assuming that the channel is $W_{\hat{\theta}(\bar Q)}$. Let us denote this plug-in decoder by $\calR^e$, and its associated exponents by $E^e_i(R,T,\theta,\alpha)$, for $\;i=1,2$ for the first ensemble, and $\breve E^e_i(R,T,\theta,\beta)$ for the second ensemble. To analyze these exponents let $E^m_i(R,T,\theta,\hat{\theta})$ for $i=1,2$, designate the error exponents associated with the optimal decoder for a known channel, when tuned to the channel $W_{\hat\theta(\bar Q)}$, but used over the channel $W_\theta$ (i.e., \emph{mismatched decoder}), for the ordinary fixed-composition ensemble (without training). Then, a routine method of types argument reveals that 
\begin{align}
E^e_i(R,T,\theta,\alpha) &= \min_{\bar Q:\; \bar Q_X = \bar P_X}\left\{\vphantom{E^m_i(R,T,\theta,\hat{\theta}(\bar Q))}\alpha \cdot D\p{\bar Q||\bar P_X\times W_\theta}\right.\nonumber\\
&\left.\ \ \ \ \ \ \ \ + (1-\alpha) \cdot E^m_i(R,T,\theta,\hat{\theta}(\bar Q))\right\},\label{ExponentPlugIn1}\\
\breve E^e_i(R,T,\theta,\beta) &= \min_{\bar Q:\; \bar Q_X = \bar P_X}\left\{\vphantom{E^m_i(R,T,\theta,\hat{\theta}(\bar Q))}\beta \cdot D\p{\bar Q||\bar P_X\times W_\theta}\right.\nonumber\\
&\left.\ \ \ \ \ \ \ \ \ \ \ \ \ \ \ \ \ + E^m_i(R,T,\theta,\hat{\theta}(\bar Q))\right\},\label{ExponentPlugIn2}
\end{align}
for $i=1,2$. Now, $E^m_i(R,T,\theta,\hat{\theta})$ can be obtained by simply replacing every instance of $\bE_{Q}{\log W_{\theta}(Y|X)}$, which represent the log-likelihoods assuming the correct channel, with the mismatched log-likelihoods  $\bE_{Q}{\log W_{\hat\theta(\bar Q)}(Y|X)}$ in the exponent expressions of \cite[Theorem 1  and Theorem 2]{exact_erasure}\footnote{As mentioned before, in \cite{exact_erasure} the i.i.d. ensemble was assumed. The modification to the fixed-composition ensemble is straightforward, and only requires removing the $D(Q_X||P_X)$ terms.}. Note, however, that since a mismatched decoder is, in general, sub-optimal, Lemma \ref{lem:1} \emph{cannot} be used, and the equality $E^m_2(R,T,\theta,\hat{\theta})=E^m_1(R,T,\theta,\hat{\theta})+T$ may not necessarily hold. Thus, in the mismatched case, the expression for $E^m_2(R,T,\theta,\theta)$ (see, \cite[Theorem 2]{exact_erasure}) must be used, along with the above replacement (to obtain $E^m_2(R,T,\theta,\hat{\theta})$). It should be stressed, however, that the expression for $E^m_2(R,T,\theta,\theta)$ in \cite[Theorem 2]{exact_erasure} is valid only for the erasure mode\footnote{In general, the undetected error probability event (pertaining to the error exponent $E_{2}$), is more difficult to analyze than the total error event (pertaining to the error exponent $E_{1}$), and in \cite{exact_erasure}, $E_{2}$ was only analyzed for the erasure regime. The difficulty stems from the fact that the analysis in \cite{exact_erasure} is possible only for disjoint decoding regions, which is not the case in the list regime. Unfortunately, a direct analysis (namely, without relying on the relation $E_{2}=E_{1}+T$, which might be wrong for the plug-in decoder) of the undetected error exponent in the list regime is much more challenging.}, i.e., $T\geq0$, which shall be assumed henceforth. Finally, as can be seen from the above expressions, we need to define/find the estimator $\hat{\theta}(\bar Q)$. If, e.g., $\Theta$ is the family of all DMCs, with input alphabet $\calX$ and output alphabet $\calY$, then the \emph{maximum likelihood estimator} can be used, which in this case, is just the parameter $\theta$ which corresponds to $\bar Q_{Y|X}$ where $\bar Q=\hat Q_{\bar\bxt\bar\byt}$. A different example is the family of all BSCs, and in this case the maximum likelihood (ML) estimator is simply $\hat Q_{\bar\bxt\bar\byt}(X\neq Y)$.

At this point, we can we can use the definition of the competitive criterion $K_n(\calC,\calR^e)$ in \eqref{Kn(R,C) definition}, and define $\bar K_n^e \triangleq \bE\ppp{K_n(\calC,\calR^e)}$, where the expectation is w.r.t. the first training ensemble defined above. As before, for a given rate $R$ and threshold $T$, we will be interested in the maximal achievable $\xi$ such that 
\begin{align}
\xi^e(R,T,\alpha,\bar P_X) \triangleq \sup\ppp{\xi\in\pp{0,1}:\;\limsup_{n\to\infty}\frac{1}{n}\log \bar K_{n}^e\leq0}.\label{XicriterionTrain}
\end{align}
The above definition sets the stage for a reasonable criterion of optimal training, which includes both the relative training time and the optimal (type of the) training sequence.
In other words, the training fraction $\alpha$ and training type $\bar P_X$ can be optimized to obtain, 
\begin{align}
\xi^e(R,T)\triangleq \max_{0<\alpha<1}\max_{\bar P_X} \xi^e(R,T,\alpha,\bar P_X).\label{optialXiPlugin}
\end{align}
Contrary to the universal decoder considered in the previous subsections, here, we can easily extract $\xi^e(R,T,\alpha,\bar P_X)$, as it appears only in the denominator of \eqref{Kn(R,C) definition}. Indeed, letting $E^e_i(R,T,\theta)$, for $\;i=1,2$, be the error exponents associated with the plug-in decoder $\calR^e$, and the training ensemble defined above, using \eqref{Kn(R,C) definition} and \eqref{XicriterionTrain}, it is easy to verify that \eqref{xi estimate formula0}-\eqref{xi estimate formula}, shown at the top of the next page, hold. 
\begin{figure*}[!t]
\normalsize
\setcounter{MYtempeqncnt}{\value{equation}}
\setcounter{equation}{68} 
\begin{align}
\xi^e(R,T,\alpha,\bar P_X) &= \min_{\theta\in\Theta}\ppp{-\frac{1}{E_{1}(R,T,\theta)}\pp{T+\limsup_{n\to\infty}\frac{1}{n}\bE\ppp{\Gamma_{\theta}({\cal C},{\calR^e})}}}\label{xi estimate formula0} \\
&= \min_{\theta\in\Theta}\ppp{-\frac{1}{E_{1}(R,T,\theta)}\pp{T-\min \ppp { E^e_1(R,T,\theta) + T , E^e_2(R,T,\theta)}}} \\
&= \min_{\theta\in\Theta}\ppp{\frac{1}{E_{1}(R,T,\theta)}\pp{\min \ppp { E^e_1(R,T,\theta), E^e_2(R,T,\theta) -T}}}. \label{xi estimate formula}
\end{align}
\hrulefill
\vspace*{4pt}
\end{figure*}
Similar results can be obtained for the second training ensemble. Note that for the second ensemble $\xi$ should be monotonically increasing with $\beta$, because the more we train the plug-in decoder, the better we compete with the informed decoder. Accordingly, when $\beta\to\infty$ the plug-in decoder actually knows the channel, so the maximal $\xi$ should be trivially one, but $R_{\text{eff}}$ is zero. In between these two extremes, we get the entire spectrum of trade-offs between the maximal $\xi$ and $R_{\text{eff}}$. 

\subsection{Numerical examples}
Consider the setting of Example \ref{exmp:1}, in which a family of BSCs is studied where $\theta\in\Theta$ designates the cross-over probability of the BSC, and $\Theta=\{0,1/100,2/100,\ldots,1\}$. Due to the symmetry of the channels, we take $P_X = \bar P_X = (1/2,1/2)$. The plug-in decoder employs the ML estimator in the initial estimation stage to estimate the unknown crossover probability.

For a given rate $R$ and threshold $T$ we will plot the error exponent achieved for any given $\theta\in\Theta$ by the various decoders. In light of the discussion in Subsection \ref{subsec:unidecoder}, for both the optimal decoder for a known channel and the universal decoder \eqref{universal_decoder}, we will assume that there is no training, i.e., $\alpha=0$ (or, $\beta=0$). From the proof of Theorem \ref{Th:1}, it is evident that the exponents achieved by the universal decoder \eqref{universal_decoder} are given by
\begin{align}
&E_1^u(R,T,\theta) \triangleq \min\left\{\vphantom{\min_{(Q,\tilde{Q})\in\calQ}}\min_{\tilde{Q}\in\calL}D({\tilde{Q}}||P_X\times W_\theta),\right.\nonumber\\
&\ \ \ \ \ \ \ \ \ \ \left.\min_{(Q,\tilde{Q})\in\calQ}\ppp{D({\tilde{Q}}||P_X\times W_\theta)+I(Q)-R}\right\},\label{exponentToeavluate}
\end{align}
and $E_2^u(R,T,\theta)=E_1^u(R,T,\theta)+T$, due to Lemma \ref{lem:2} (see, Appendix \ref{app:1}). To evaluate \eqref{exponentToeavluate}, in every instance of $\xi$ (e.g., \eqref{Gdef}), we substitute $\xi^*(R,T)$ which was already calculated in Example \ref{exmp:1}. Finally, the exponents of the plug-in decoder are given in \eqref{ExponentPlugIn1} and \eqref{ExponentPlugIn2}, for the two ensembles, respectively. In our simulations, we choose $R=T=0.1$, for which $\xi^*(R,T)=0.66$, and we use $\alpha=0.25$, which turns out to be the (approximately) optimal length of the training sequence, for all $\theta\in\Theta$. Fig. \ref{fig:1} compares the error exponents achieved by the various decoders (i.e., optimal decoder for known channel, universal decoder, and plug-in decoder), as a function of $\theta$, using the first ensemble (defined via $\alpha$) for the plug-in decoder. It can be seen that there is a noticeable loss in using the plug-in decoder compared to the universal decoder. Fig. \ref{fig:2} compares the error exponents achieved by the various decoders, as a function of $\theta$, using the second ensemble (defined via $\beta$) for the plug-in decoder, using two values of $\beta$. From this figure, it can be seen that for $\beta=0.32$ the performance of the plug-in decoder are close to the universal decoder, and for $\beta=0.5$ the performance are fairly close to the known channel decoder. Recall, however, that the price in using $\beta=0.32$ and $\beta=0.5$ is an effective rate of $R_{\text{eff}} = 0.76\cdot R$ and $R_{\text{eff}} = (2/3)\cdot R$, respectively.
 
\begin{figure}[!t]
\begin{minipage}[b]{1.0\linewidth}
  \centering
	\centerline{\includegraphics[width=10cm,height = 7cm]{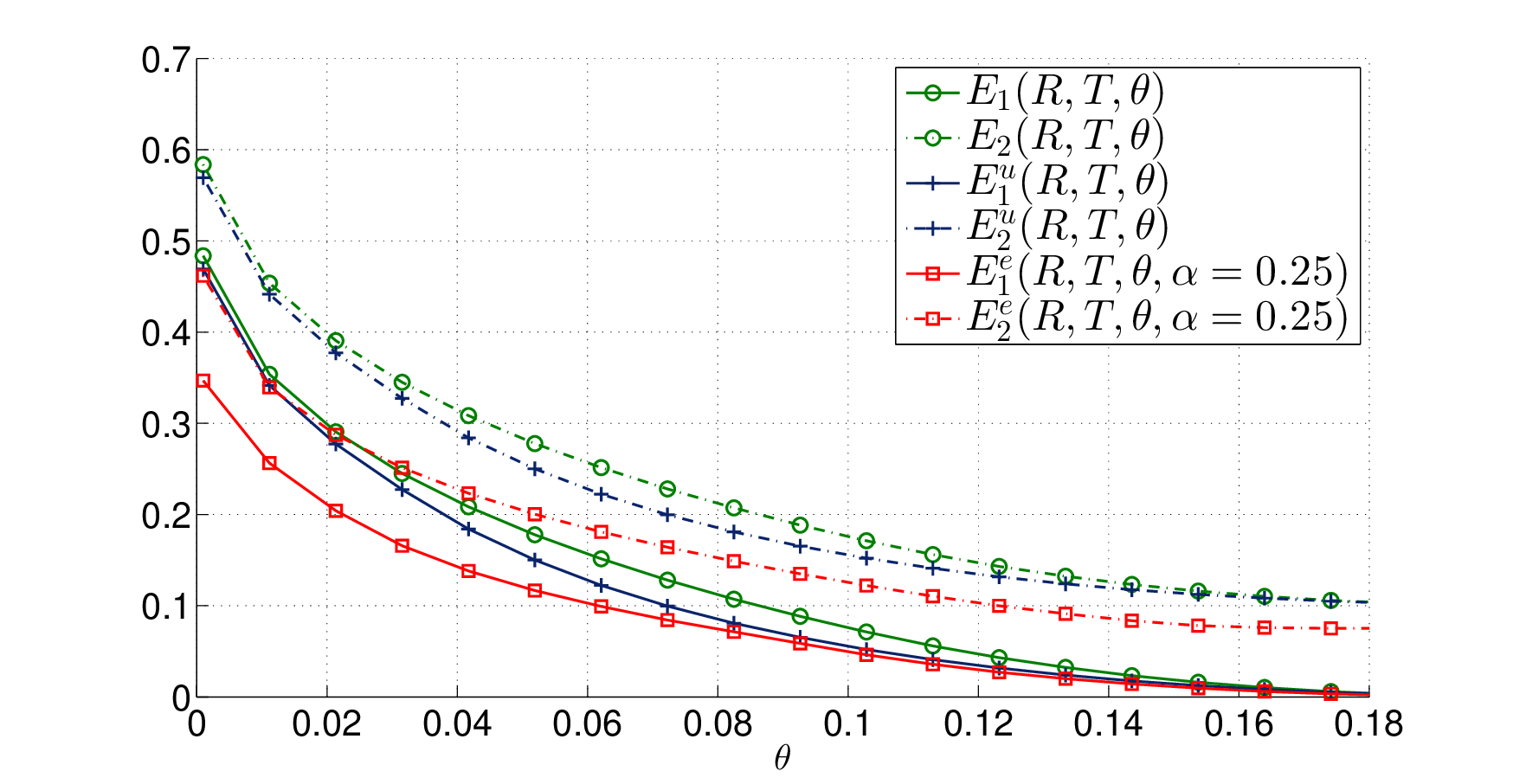}}
	\end{minipage}
\caption{Comparison of the error exponents achieved by the optimal decoder for known channel, universal decoder, and the plug-in decoder, as a function of $\theta$, for $R=T=0.1$, $\xi^*(R,T)=0.66$, and using the first ensemble with $\alpha=0.25$.}
\label{fig:1}
\end{figure}

\begin{figure}[!t]
\begin{minipage}[b]{1.0\linewidth}
  \centering
	\centerline{\includegraphics[width=10cm,height =7cm]{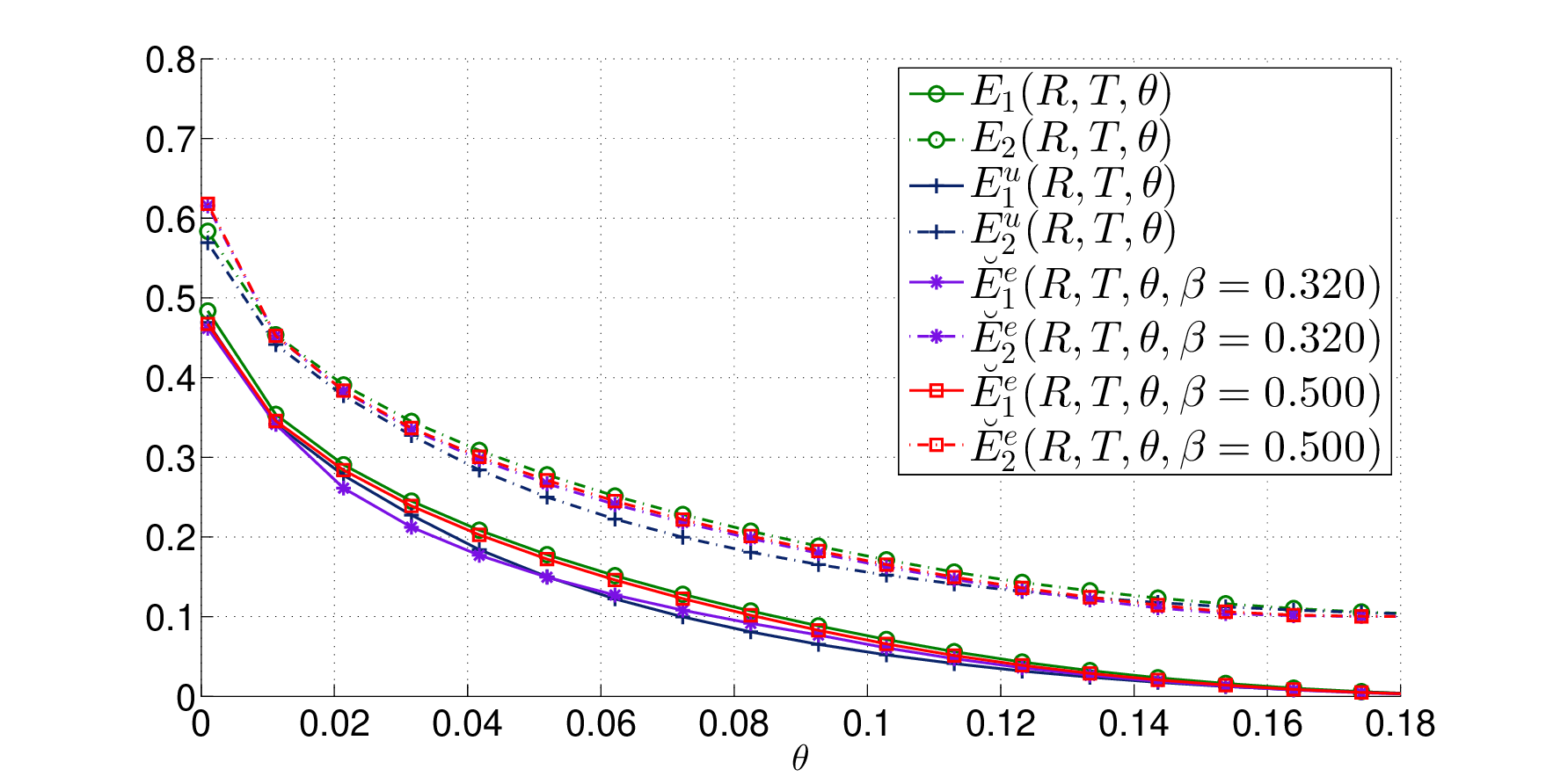}}
	\end{minipage}
\caption{Comparison of the error exponents achieved by the optimal decoder for known channel, universal decoder, and the plug-in decoder, as a function of $\theta$, for $R=T=0.1$, $\xi^*(R,T)=0.66$, and using the second ensemble with $\beta=0.32,0.5$.}
\label{fig:2}
\end{figure}

\begin{rem}
Remarkably, in our numerical calculations we get that $\xi^e(R,T)=0$ (defined in \eqref{optialXiPlugin}), for all $R\geq0$ and $T>0$. This result may be attributed to the fact that our competitive criterion implicitly assumes that the difference between the total error exponent and the undetected error exponent is $T$ (and rightfully, as this is true for both the optimal decoder in the case of a known channel, and for the asymptotically optimal decoder in the case of unknown channel). However, this is not necessarily true for the plug-in decoder, and $E^e_2(R,T,\theta)$ maybe less than $E^e_1(R,T,\theta)+T$, and so the undetected error exponent of the plug-in decoder poorly competes with $\xi\cdot E_1(R,T,\theta)+T$ (recall the definition in \eqref{equminmax}). For this example, no value of $\alpha$ has produced $E_2^e(R,T,\theta)\geq T$ uniformly over $\theta\in\Theta$ and this resulted in the zero values $\xi^e(R,T)$ (recall \eqref{xi estimate formula}). So, even in this relatively simple example, using a plug-in decoder will cause a significant loss in error exponents. In light of this result, a less pessimistic criterion, could be
\begin{align}
\min_\calR\max_{\theta\in\Theta}\frac{\Gamma_\theta\p{\calC,\calR}}{e^{-n\pp{\bar\xi\cdot E_1\p{R,T,\theta}+\bar\xi T}}}, \label{equminmax_alternative}
\end{align}
where now $T$ is also multiplied by $\bar\xi$. A fraction $\xi$ achieved under this criterion implies that the plug-in decoder simultaneously achieves exponents of ${E}_1^e(R,T,\theta)\geq \bar\xi\cdot E_1(R,T,\theta)$ and ${E}_2^e(R,T,\theta)\geq \bar\xi\cdot E_2(R,T,\theta)=\xi\cdot \p{E_1(R,T,\theta)+T}$  for all $\theta\in\Theta$. The analysis of the maximal achievable fraction $\bar\xi(R,T)$ that pertains to \eqref{equminmax_alternative} is the same as for $\xi^e(R,T)$ under the original criterion. Of course, this alternative criterion will lead to different numerical values for $\xi^e(R,T)$ (specifically, positive values for $T>0$). 
\end{rem}

\section{Proofs}\label{sec:proofs}
In the following, for simplicity of notations, we omit the dependency of the various quantities on $R$, $T$, and $\xi$, as they remain constants along the proofs, e.g., $\Omega(R,T,\xi,Q,\tilde{Q})$ will be replaced with $\Omega(Q,\tilde{Q})$.
\begin{proof}[Proof of Theorem \ref{Th:1}]
We analyze the total error term, following the steps of \cite[Section V]{exact_erasure}. 
As was mentioned earlier, we want to assess the (exact) exponential behavior of $\bE\left[K_{n}({\cal C},\hat{{\cal R}})\right]$. In \cite[Theorem 2]{universal_minimax_erasure}, an upper bound was derived on this quantity, so here we seek a tight lower bound.
Let $\Theta_{n}$ denote the set of values of $\theta$ that achieve the maximum at the right-hand side of \eqref{ def of f} for some $\bx\in{\cal X}^{n},\by\in{\cal Y}^{n}$. Note that the elements of $\Theta_n$ depend on $\bx$ and $\by$ only through their joint type, and whence, we have that $\left|\Theta_{n}\right|\leq(n+1)^{|{\cal X}|\cdot|{\cal Y}|-1}$, i.e. the size of $\Theta_{n}$ is a polynomial function of $n$. Now, 
\begin{align}
&\bE\left[K_{n}({\cal C},\hat{{\cal R}})\right] =  \bE\left\{ \max_{\theta\in\Theta}\frac{\Gamma_{\theta}({\cal C},\hat{{\cal R}})}{e^{-n\left(\xi E_{1}(\theta)+T\right)}}\right\}\nonumber\\
 & \geq  \bE\left\{ \max_{\theta\in\Theta_{n}}\frac{\Gamma_{\theta}({\cal C},\hat{{\cal R}})}{e^{-n\left(\xi E_{1}(\theta)+T\right)}}\right\} \nonumber\\
 & \overset{{\scriptstyle (a)}}{\doteq}  \bE\left\{ \sum_{\theta\in\Theta_{n}}\frac{\Gamma_{\theta}({\cal C},\hat{{\cal R}})}{e^{-n\left(\xi E_{1}(\theta)+T\right)}}\right\} \nonumber\\
 & \overset{{\scriptstyle (b)}}{=}  \bE\left\{ \sum_{\theta\in\Theta_{n}}\frac{\frac{1}{M}\sum_{m=1}^{M}\sum_{\byt\in\hat{{\cal R}}_{m}}\sum_{m'\neq m}W_{\theta}(\by|\bX_{m'})}{e^{-n\left(\xi E_{1}(\theta)+T\right)}}\right.\nonumber\\
&\left.\ \ \ \ \ \ \ \ \ \ \ \ \ \ \ \ \ +\frac{\frac{1}{M}\sum_{m=1}^{M}\sum_{\byt\in\hat{{\cal R}}_{m}^{c}}e^{-nT}W_{\theta}(\by|\bX_{m})}{e^{-n\left(\xi E_{1}(\theta)+T\right)}}\right\} \nonumber\\
 & =  \bE\left\{ \frac{1}{M}\sum_{m=1}^{M}\sum_{\byt\in\hat{{\cal R}}_{m}}\sum_{m'\neq m}\sum_{\theta\in\Theta_{n}}e^{n\left(\xi E_{1}(\theta)+T\right)}W_{\theta}(\by|\bX_{m'})\right\} \nonumber\\
 &  \ \ \ +\bE\left\{ \frac{1}{M}\sum_{m=1}^{M}\sum_{\byt\in\hat{{\cal R}}_{m}^{c}}\sum_{\theta\in\Theta_{n}}e^{n\xi E_{1}(\theta)}W_{\theta}(\by|\bX_{m})\right\} \nonumber\\
 & \overset{{\scriptstyle (c)}}{\doteq}  \bE\left\{ \frac{1}{M}\sum_{m=1}^{M}\sum_{\byt\in\hat{{\cal R}}_{m}}\sum_{m'\neq m}\max_{\theta\in\Theta_{n}}e^{n\left(\xi E_{1}(\theta)+T\right)}W_{\theta}(\by|\bX_{m'})\right\} \nonumber\\
 &  \ \ \ +\bE\left\{ \frac{1}{M}\sum_{m=1}^{M}\sum_{\byt\in\hat{{\cal R}}_{m}^{c}}\max_{\theta\in\Theta_{n}}e^{n\xi E_{1}(\theta)}W_{\theta}(\by|\bX_{m})\right\} \nonumber\\
 & =  \bE\left\{ \frac{1}{M}\sum_{m=1}^{M}\sum_{\byt\in\hat{{\cal R}}_{m}}\sum_{m'\neq m}f(\bX_{m'},\by)\right\}\nonumber\\
&\ \ \ \ \ \  +\bE\left\{ \frac{1}{M}\sum_{m=1}^{M}\sum_{\byt\in\hat{{\cal R}}_{m}^{c}}e^{-nT}f(\bX_{m},\by)\right\} \label{LastPass}
\end{align}
where in $(a)$ and $(c)$ we have used the fact that the size of $\Theta_n$ is polynomial, and thus can be absorbed in the $e^{nT}$ factor (see, \cite[pp. 5, footnote 2]{exact_erasure}), and (b) follows from \eqref{gammaTh}. As was shown in \cite[eq. after (A.1)]{universal_minimax_erasure}, the lower bound in \eqref{LastPass} is, in fact, also an upper bound on $\bE\left[K_{n}({\cal C},\hat{{\cal R}})\right]$. Therefore, in the exponential scale, nothing was lost due to the above bounding, and we essentially have that
\begin{align}
\bE\left[K_{n}({\cal C},\hat{{\cal R}})\right]&\doteq \bE\left\{ \frac{1}{M}\sum_{m=1}^{M}\sum_{\byt\in\hat{{\cal R}}_{m}}\sum_{m'\neq m}f(\bX_{m'},\by)\right\} \nonumber\\
&\ \ +\bE\left\{ \frac{1}{M}\sum_{m=1}^{M}\sum_{\byt\in\hat{{\cal R}}_{m}^{c}}e^{-nT}f(\bX_{m},\by)\right\}.\label{LastPass2}
\end{align}
Contrary to the proof technique used in \cite{universal_minimax_erasure} to assess the exponential behavior of \eqref{LastPass2}, where Chernoff and Jensen bounds were invoked, here, we will evaluate the \emph{exact} exponential scale of the two terms on the right hand side of \eqref{LastPass2}. It can be noticed that the first expression is related to undetected errors (or average number of incorrect codewords on the list), and the second one is related to the total error (erasures and undetected errors). For brevity, we define
\begin{align}
A_{1}\triangleq e^{-nT}\cdot\bE\left\{ \frac{1}{M}\sum_{m=1}^{M}\sum_{\byt\in\hat{{\cal R}}_{m}^{c}}f(\bX_{m},\by)\right\},\label{A1term}
\end{align}
and 
\begin{align}
A_{2} & \triangleq  \bE\left\{ \frac{1}{M}\sum_{m=1}^{M}\sum_{\byt\in\hat{{\cal R}}_{m}}\sum_{m'\neq m}f(\bX_{m'},\by)\right\},\label{A2term}
\end{align}
and so
\[
\bE\left[K_{n}({\cal C},\hat{{\cal R}})\right]\doteq A_{1}+A_{2}.
\]
As was mentioned before, we would like to analyze the exponential rate of \eqref{LastPass2}, or, equivalently, of \eqref{A1term} and \eqref{A2term}. Now, note that,
\begin{align}
&\lim_{n\to\infty}\frac{1}{n}\log\bE\left[K_{n}({\cal C},\hat{{\cal R}})\right]\nonumber\\
&\ \ \ \ \ \ \ \ \ =\max\left\{ \lim_{n\to\infty}\frac{1}{n}\log A_{1},\lim_{n\to\infty}\frac{1}{n}\log A_{2}\right\},
\end{align}
whenever all the limits exist. Then, a fraction $\xi$ is \emph{achievable} if both $n^{-1}\log A_{1}$ and $n^{-1}\log A_{2}$ converge to a non-positive constant as $n\to\infty$. 
Let us begin with the evaluation of $A_1$. Continuing from \eqref{A1term}, we get \eqref{lastTerm0}-\eqref{lastTerm}, shown at the top of the next page, 
\begin{figure*}[!t]
\normalsize
\setcounter{MYtempeqncnt}{\value{equation}}
\setcounter{equation}{79} 
\begin{align}
A_{1} &=  e^{-nT}\bE\left\{ \frac{1}{M}\sum_{m=1}^{M}\sum_{\byt}f(\bX_{m},\by)\cdot\calI\{\by\in\hat{{\cal R}}_{m}^{c}\}\right\}\label{lastTerm0}\\
 & \overset{{\scriptstyle (a)}}{=}  e^{-nT}\bE\left\{\left. \sum_{\byt}f(\bX_{m},\by)\cdot\calI\{\by\in\hat{{\cal R}}_{m}^{c}\}\right|m\mbox{th message transmitted}\right\} \\
 & =  e^{-nT}\sum_{\byt}\bE\left\{\left. f(\bX_{m},\by)\cdot\calI\{\by\in\hat{{\cal R}}_{m}^{c}\}\right|m\mbox{th message transmitted}\right\} \\
 & =  e^{-nT}\sum_{\bxt_{m}}P_{X}(\bX_{m}=\bx_{m})\sum_{\byt}\bE\left\{\left. f(\bX_{m},\by)\cdot\calI\{\by\in\hat{{\cal R}}_{m}^{c}\}\right|\bX_{m}=\bx_{m},m\mbox{th message transmitted}\right\} \\
 & =  e^{-nT}\sum_{\bxt_{m}}P_{X}(\bX_{m}=\bx_{m})\sum_{\byt}f(\bx_{m},\by)\cdot\Pr\left\{\left. \by\in\hat{{\cal R}}_{m}^{c}\right|\bX_{m}=\bx_{m},m\mbox{th message transmitted}\right\}\label{lastTerm}
\end{align}
\hrulefill
\vspace*{4pt}
\end{figure*}
where (a) follows from the symmetry of the random coding mechanism, and the probability in the last equation is over the random choice of $\ppp{\bX_{m'}}_{m'\neq m}$, which determines $\hat{{\cal R}}_{m}$. Now, if $Q$ is the joint empirical probability distribution (defined on $\calX\times\calY$) of $\bx_{m'}$ and $\by$, then,
\begin{align}
&f(\bx_{m'},\by) =  \max_{\theta\in\Theta}\left\{ e^{n\left(\xi E_{1}(\theta)+T\right)}W_{\theta}(\by|\bx_{m'})\right\} \\
 & =  \max_{\theta\in\Theta}\left\{ e^{n\left(\xi E_{1}(\theta)+T\right)}e^{n\bE_{Q}\log W_\theta(Y\vert X)}\right\} \\
 & =  \exp\left[n\cdot\max_{\theta\in\Theta}\left\{ \left(\xi E_{1}(\theta)+T\right)+\bE_{Q}\log W_\theta(Y\vert X)\right\} \right]\\
 & =  \exp\left[n\cdot G(Q)\right],
\end{align}
where
\begin{align}
G(Q) \triangleq \max_{\theta\in\Theta} \ppp{\xi  E_1\p{\theta}+T+\bE_{\tilde{Q}}\log W_\theta(Y\vert X)}.\label{QdefProof}
\end{align}
Next, we shall focus on the latter probability in \eqref{lastTerm}. For a given $\bx_m$ and $\by$, let $\tilde{Q} = \hat{P}_{\bxt\byt}$, and let $N_{\byt}(Q)$ denote the number of codewords (excluding $\bx_{m}$) whose joint empirical probability distribution with a given $\by$ is $Q$. Accordingly, we have that
\begin{align}
&\Pr\left\{ \by\in\hat{{\cal R}}_{m}^{c}|\bX_{m}=\bx_{m},\bY=\by\right\}\nonumber\\
 & =  \Pr\left\{ \sum_{m'\neq m}f(\bX_{m'},\by)\geq f(\bx_{m},\by)e^{-nT}\right\}\nonumber\\
 & =  \Pr\left\{ \sum_{Q}N_{\byt}(Q)\exp\left[n\cdot G(Q)\right]\geq\exp\pp{n\cdot G(\tilde{Q})}e^{-nT}\right\}\nonumber\\
 & \doteq  \Pr\left\{ \max_{Q}N_{\byt}(Q)\exp\left[n\cdot G(Q)\right]\geq\exp\pp{n\cdot G(\tilde{Q})}e^{-nT}\right\}\nonumber\\
 & =  \Pr\left\{ \bigcup_{Q} N_{\byt}(Q)\exp\left[n\cdot G(Q)\right]\geq\exp\pp{n\cdot G(\tilde{Q})}e^{-nT} \right\}\nonumber\\
 & \doteq  \sum_{Q}\Pr\left\{ N_{\byt}(Q)\exp\left[n\cdot G(Q)\right]\geq\exp\pp{n\cdot G(\tilde{Q})}e^{-nT}\right\}\nonumber\\
 & \doteq  \max_{Q}\Pr\left\{ N_{\byt}(Q)\exp\left[n\cdot G(Q)\right]\geq\exp\pp{n\cdot G(\tilde{Q})}e^{-nT}\right\}\nonumber\\
 & = \max_{Q\in\calS(\hat{P}_{\byt})}\Pr\left\{ N_{\byt}(Q)\geq\exp\pp{n\cdot\Omega(Q,\tilde{Q})}\right\}\label{maxMes}
\end{align}
where
\begin{align}
\Omega(Q,\tilde{Q}) \triangleq G(R,T,\xi,\tilde{Q})-G(R,T,\xi,Q)-T
\end{align}
and for a given $\breve{Q}_Y$, $\calS(\breve{Q}_Y)\triangleq\{Q:\;Q_Y = \breve{Q}_Y\}$. The asymptotic analysis of the probability in \eqref{maxMes} was carried out in \cite[Section V]{exact_erasure} for any given $\Omega$, and it is not different here. The result relies on the exponential decay of the probability that the joint type of a given $\by$ with a randomly chosen $\bx_{m'}$ is $Q$, namely 
\[
p\triangleq\Pr\ppp{\hat{P}_{\bXt_{m'},\byt}=Q}. \label{probability of joint type}
\]
Under the assumed random coding ensemble, a simple application of the method of types reveals that \cite{csiszar2011information}
\[
p\doteq\exp\ppp{-nI(Q)}.\label{Mutu}
\]
Next, standard large deviations arguments (cf. \cite[Section V]{exact_erasure}) reveal that for $Q\in\calS(\hat{P}_{\byt})$, we have \eqref{largeDe}, shown at the top of the next page,
\begin{figure*}[!t]
\normalsize
\setcounter{MYtempeqncnt}{\value{equation}}
\setcounter{equation}{93} 
\begin{align}
&\Pr\left\{ N_{\by}(Q)\geq e^{n\Omega(Q,\tilde{Q})}\right\} \doteq 
\begin{cases}
\exp\left\{ -n\left|I(Q)-R\right|^{+}\right\} \ & \Omega(Q,\tilde{Q})\leq0\\
1 \ &0<\Omega(Q,\tilde{Q})\leq R-I(Q)\\
0 \ & \Omega(Q,\tilde{Q})>R-I(Q)
\end{cases},\label{largeDe}
\end{align}
\hrulefill
\vspace*{4pt}
\end{figure*}
where by $a_n\doteq 0 $ we mean that $a_n$ decreases to $0$ super-exponentially fast. Define $U(\tilde{Q})$ in \eqref{Udef}.
\begin{figure*}[!t]
\normalsize
\setcounter{MYtempeqncnt}{\value{equation}}
\setcounter{equation}{94} 
\begin{align}
U(\tilde{Q})\triangleq \max_{Q\in\calS(\tilde{Q}_Y)}\begin{cases}
\exp\left[ -n(I(Q)-R)\right] \ & \Omega(Q,\tilde{Q})\leq0,\;I(Q)>R\\
1 \ &I(Q)\leq R,\;\Omega(Q,\tilde{Q})\leq R-I(Q)\\
0 \ &\text{otherwise}
\end{cases}.\label{Udef}
\end{align}
\hrulefill
\vspace*{4pt}
\end{figure*}
Thus, substituting \eqref{largeDe} in \eqref{maxMes} and then in \eqref{lastTerm}, we obtain, using the method of types,
\begin{align}
A_{1} & \doteq  e^{-nT}\sum_{\bx_{m}}P(\bX_{m}=\bx_{m})\sum_{\by}f(\bx_{m},\by)\cdot U(\tilde{Q})\\
 & \doteq  e^{-nT}\max_{\tilde{Q}}\exp\pp{nH_{Y\vert X}(\tilde{Q})}\exp\left[n G(\tilde{Q})\right] U(\tilde{Q}).
\end{align}
Note that the condition:
\[
\Omega(Q,\tilde{Q})\leq R-I(Q)
\]
in \eqref{Udef} is equivalent to
\[
G(\tilde{Q})\leq G(Q)-I(Q)+R+T.
\]
Thus, we obtain that the exponent of $A_{1}$ is given by
\[
\lim_{n\to\infty}\frac{1}{n}\log A_{1}= -T-\min\ppp{\tilde{E}_{a}(R,T,\xi),\tilde{E}_{b}(R,T,\xi)},
\]
in which
\[
\tilde{E}_{a}(R,T,\xi)\triangleq\min_{(Q,\tilde{Q})\in\calQ}\left[-H_{Y\vert X}(\tilde{Q})-G(\tilde{Q})+I(Q)-R\right]\label{Ea}
\]
where
\begin{align}
\calQ\triangleq \ppp{(Q,\tilde{Q})\in\calD:\;I(Q)\geq R,\ \Omega(R,T,\xi,Q,\tilde{Q})\leq0},
\end{align}
and 
\[
\tilde{E}_{b}(R,T,\xi)\triangleq\min_{\tilde{Q}\in{\cal L}}\left[-H_{Y\vert X}(\tilde{Q})-G(\tilde{Q})\right]
\]
where
\begin{align}
\calL&\triangleq\left\{\vphantom{\max_{Q:(Q,\tilde{Q})\in\calD,\;I(Q)\leq R}}\tilde{Q}:G(R,T,\xi,\tilde{Q})\leq R+T\right.\nonumber\\
&\left.\ \ +\max_{Q:(Q,\tilde{Q})\in\calD,\;I(Q)\leq R}\pp{G(R,T,\xi,Q)-I(Q)}\right\}.
\end{align}
Now, we want to find the maximal $\xi$ for which 
\begin{align}
-T-\tilde{E}_{a}(R,T,\xi)\leq0,
\end{align}
\begin{align}
-T-\tilde{E}_{b}(R,T,\xi)\leq0.
\end{align}
For $\tilde{E}_{a}(R,T,\xi)$, substituting $G(Q)$, given in \eqref{QdefProof}, in \eqref{Ea}, we obtain \eqref{maxtht0}-\eqref{maxtht}, shown at the top of the page,
\begin{figure*}[!t]
\normalsize
\setcounter{MYtempeqncnt}{\value{equation}}
\setcounter{equation}{106} 
\begin{align}
-\tilde{E}_{a}(R,T,\xi)-T &= \max_{(Q,\tilde{Q})\in\calQ}\pp{H_{Y\vert X}(\tilde{Q})+G(\tilde{Q})-I(Q)+R}-T\label{maxtht0}\\
&= \max_{(Q,\tilde{Q})\in\calQ}\pp{H_{Y\vert X}(\tilde{Q})+\max_\theta \ppp{\xi E_1\p{\theta}+\bE_{\tilde{Q}}\log W_\theta(Y\vert X)}-I(Q)+R}\\
&=\max_\theta\ppp{\xi E_1\p{\theta}+\max_{(Q,\tilde{Q})\in\calQ}\ppp{H_{Y\vert X}(\tilde{Q})-I(Q)+R+\bE_{\tilde{Q}}\log W_\theta(Y\vert X)}}\\
&=\max_\theta\ppp{\xi E_1\p{\theta}-\min_{(Q,\tilde{Q})\in\calQ}\ppp{D(\tilde{Q}||P_X\times W_\theta)+I(Q)-R}},\label{maxtht}
\end{align}
\hrulefill
\vspace*{4pt}
\end{figure*}
which is exactly the condition in \eqref{mainEq}. In a similar manner, one obtains
\begin{align}
&-\tilde{E}_{b}(R,T,\xi)-T \nonumber\\
&\ \ \ \ \ \ \ \ = \max_\theta\ppp{\xi E_1\p{\theta}-\min_{\tilde{Q}\in\calL}\;D(\tilde{Q}||P_X\times W_\theta)},\label{maxtht2}
\end{align}
which is exactly the condition in \eqref{mainEq2}. This concludes the analysis of $A_1$, and we next consider $A_2$. In essence, we can derive the exponential behavior of $A_2$, using similar methods to the derivation of $E_2(R,T,\theta)$ in \cite{exact_erasure}. However, since the resulting exponent $\lim_{n\to\infty}\frac{1}{n}\log A_{2}$ is continuous in $T$, just as $\lim_{n\to\infty}\frac{1}{n}\log A_{1}$, we may invoke the following lemma, which is analogue to Lemma \ref{lem:1}, and is proved in Appendix \ref{app:1}:
\begin{lemma} \label{lem:2}
For all $R$ and $T$:
\begin{align}
\lim_{n\to\infty}\frac{1}{n}\log A_{2} = T+\lim_{n\to\infty}\frac{1}{n}\log A_{1}.
\end{align}
\end{lemma}

Thus, it suffices to asses the exponent of either $A_1$ or $A_2$, and then the other one is immediately obtained. While both $A_1$ and $A_2$ can be analyzed, the analytical formula for the exponent of $A_1$ is more compact, and thus we only presented it. 
\end{proof}

\begin{proof}[Proof of Corollary \ref{cor:1}]
Define the set $\calG(\tilde Q_Y)\triangleq\{Q:\;I(Q)<R,\;Q_Y = \tilde Q_Y\}$. We start from the first condition in Theorem \ref{Th:1}, which is equivalent to requiring that for all $\theta$ and $\tilde Q$
\begin{align}
&\xi E_1\p{R,T,\theta}\leq D({\tilde{Q}}||P_X\times W_\theta)\nonumber\\
&+\min_{Q\in\calG^c(\tilde Q_Y)}\max_{\lambda\geq0}\pp{I(Q)-R+\lambda\cdot\Omega(R,T,\xi,Q,\tilde{Q})}.\label{forallTh}
\end{align}
Letting 
\begin{align}
&\Omega_{\theta_1,\theta_2}(R,T,\xi,Q,\tilde{Q})\triangleq \xi E_1\p{R,T,\theta_1}+\bE_{\tilde{Q}}\pp{\log W_{\theta_1}\p{Y\vert X}}\nonumber\\
&-\pp{\xi E_1\p{R,T,\theta_2}+\bE_{Q}\pp{\log W_{\theta_2}\p{Y\vert X}}} - T,\label{omegath1th2}
\end{align}
we have by definition,
\begin{align}
\Omega(R,T,\xi,Q,\tilde{Q}) = \max_{\theta_1}\min_{\theta_2}\;\Omega_{\theta_1,\theta_2}(R,T,\xi,Q,\tilde{Q}).\label{EquiOme}
\end{align}
Substituting \eqref{EquiOme} in \eqref{forallTh}, we get
\begin{align}
&\xi E_1\p{R,T,\theta}\leq D({\tilde{Q}}||P_X\times W_\theta)\nonumber\\
&\ \ \ \ \ \ \ \ \ \ \ \ \ \ +\min_{Q\in\calG^c(\tilde Q_Y)}\max_{\lambda\geq0}\max_{\theta_1}\min_{\theta_2}\left[\vphantom{\Omega_{\theta_1,\theta_2}(R,T,\xi,Q,\tilde{Q})}I(Q)-R\right.\nonumber\\
&\left.\ \ \ \ \ \ \ \ \ \ \ \ \ \ \ \ \ \ \ \ \ \ \ +\lambda\cdot\Omega_{\theta_1,\theta_2}(R,T,\xi,Q,\tilde{Q})\right],\label{midterm2}
\end{align}
which is equivalent to demanding that for all $Q\in\calG_R^c(\tilde{Q}_Y)$ there exist some $\lambda\geq0$ and $\theta_1\in\Theta$, such that for all $\theta_2\in\Theta$ we have 
\begin{align}
\xi E_1\p{R,T,\theta}&\leq D({\tilde{Q}}||P_X\times W_\theta)+I(Q)-R\nonumber\\
&\ \ \ +\lambda\cdot\Omega_{\theta_1,\theta_2}(R,T,\xi,Q,\tilde{Q}).\label{midterm3}
\end{align} 
Upon substitution of \eqref{omegath1th2} in \eqref{midterm3}, after rearranging the terms, we obtain $\xi\leq\hat{\xi}_1(R,T,Q,\tilde{Q},\theta,\theta_1,\theta_2,\lambda)$, where $\hat{\xi}_1$ is defined in \eqref{hatXi}. Thus, the largest achievable $\xi$ which satisfies the first condition is $\xi_1^*(R,T)$. In the same way, the second condition yields $\xi_2^*(R,T)$, and thus, $\xi^*(R,T) = \min\ppp{\xi_1^*(R,T),\xi_2^*(R,T)}$.
\end{proof}

\begin{proof}[Proof of Corollary \ref{cor:2}]
In the following, we analyze the objective in \eqref{mainEq} for any $\theta$. Starting with the left term, $E_1(\theta)$, note that this is just the expression that was considered in \cite[pp. 6450-6451, eqs. (64)-(73)]{exact_erasure}. For completeness, we present here the main steps in the simplification of this term to the BSC. We start with the analysis of $E_a(R,T)$ given in \eqref{AneliaExp1}. First, note that
\begin{align}
&\bE_{\hat{Q}}\log W_\theta(Y\vert X)-\bE_{Q}\log W_\theta(Y\vert X)\nonumber\\
&\ \ \ \ \ \ \  = \pp{Q\p{X\neq Y}-\hat{Q}\p{X\neq Y}}\beta
\end{align}
where $\beta = \log\pp{\p{1-\theta}/\theta}$. Thus, recalling \eqref{AneliaExp}, $E_1(\theta)$ takes the form
\begin{align}
&\min_{\tilde{Q}}\left\{\vphantom{\abs{\min_{Q\in\hat{\calQ}_{\text{BSC}}(\tilde{Q})}\p{-H_{X\vert Y}(Q)+\log 2-R}}^+}D(\tilde{Q}||P_X\times W_\theta)\right.\nonumber\\
&\ \ \ \ \ \ \ \ \left.+\abs{\min_{Q\in\hat{\calQ}_{\text{BSC}}(\tilde{Q})}\p{-H_{X\vert Y}(Q)+\log 2-R}}^+\right\}
\end{align}
where 
\begin{align}
\hat{\calQ}_{\text{BSC}}(\tilde{Q})&\triangleq\left\{Q:\; Q_Y = \tilde{Q}_Y,\right.\nonumber\\
&\left.\ \ \ \ \ \ \ \ \  Q\p{X\neq Y}\leq\tilde{Q}\p{X\neq Y}+\frac{T}{\beta}\right\}.
\end{align}
Now, note that
\begin{align}
H_{X\vert Y}(Q) = H_{\calI\ppp{X\neq Y}\vert Y}\p{Q}\leq H_{\calI\ppp{X\neq Y}}(Q),
\end{align}
and thus
\begin{align}
&\min_{\tilde{Q}}\left\{\vphantom{\abs{\min_{Q\in\hat{\calQ}_{\text{BSC}}(\tilde{Q})}\p{-H_{\calI\ppp{X\neq Y}}(Q)+\log 2-R}}^+}D(\tilde{Q}||P_X\times W_\theta)\right.\nonumber\\
&\left.\ \ \ \ \ \ \ \ \ \ +\abs{\min_{Q\in\hat{\calQ}_{\text{BSC}}(\tilde{Q})}\p{-H_{X\vert Y}(Q)+\log 2-R}}^+\right\}\nonumber\\
&\geq \min_{\tilde{Q}}\left\{\vphantom{\abs{\min_{Q\in\hat{\calQ}_{\text{BSC}}(\tilde{Q})}\p{-H_{\calI\ppp{X\neq Y}}(Q)+\log 2-R}}^+}D(\tilde{Q}||P_X\times W_\theta)\right.\nonumber\\
&\left.\ \ \ +\abs{\min_{Q\in\hat{\calQ}_{\text{BSC}}(\tilde{Q})}\p{-H_{\calI\ppp{X\neq Y}}(Q)+\log 2-R}}^+\right\}\label{eqeq}\\
& = \min_{\tilde{q}}\ppp{D\p{\tilde{q}||\theta}+\abs{\min_{q\leq\tilde{q}+T/\beta}\p{-h\p{q}+\log 2-R}}^+}
\end{align}
where the last step follows since the minimizing $\tilde{Q}$ is such that $\tilde{Q}_X = P_X$ to obtain minimal $D(\tilde{Q}||P_X\times W_\theta)$, and it is easy to verify using convexity arguments that given $\tilde{Q}(X\neq Y)=\tilde{q}$ the divergence	$D(\tilde{Q}||P_X\times W_\theta)$ is minimized for a symmetric $\tilde{Q}_{Y\vert X}$, namely,
\begin{align}
\tilde{Q}_{Y\vert X}\p{y\vert x} = \begin{cases}
\tilde{q}\ & x=y\\
1-\tilde{q}\ &x\neq y
\end{cases},
\end{align}
for which $D(\tilde{Q}||P_X\times W_\theta) = D\p{\tilde{q}||\theta}$. Finally, it is evident that we have equality in \eqref{eqeq} if we choose
\begin{align}
Q_{Y\vert X}\p{y\vert x} = \begin{cases}
q\ & x=y\\
1-q\ &x\neq y
\end{cases},
\end{align}
and thus it is the minimizer. Next, we observe that $-h\p{q}$ is a decreasing function of $q$ for $q\in\pp{0,1/2}$ and increasing for $q\in\pp{1/2,1}$. Thus,
\begin{align}
&\min_{\tilde{q}}\left\{\vphantom{\abs{\min_{q\leq\tilde{q}+T/\beta}\p{-h\p{q}+\log 2-R}}^+}D\p{\tilde{q}||\theta}\right.\nonumber\\
&\left.\ \ \ \ \ \ \ \ \ \ +\abs{\min_{q\leq\tilde{q}+T/\beta}\p{-h\p{q}+\log 2-R}}^+\right\}\nonumber\\
& = \min_{\tilde{q}}\left\{\vphantom{\abs{\min_{q\leq\tilde{q}+T/\beta}\p{-h\p{q}+\log 2-R}}^+}D\p{\tilde{q}||\theta}\right.\nonumber\\
&\left.\ \ \ \ \ \ \ \ \ \ +\abs{-h\p{\min\ppp{\frac{1}{2},\tilde{q}+\frac{T}{\beta}}}+\log 2-R}^+\right\}\nonumber\\
& = \min_{\tilde{q}}\left\{\vphantom{h\p{\min\ppp{\delta_{GV}\p{R},\tilde{q}+\frac{T}{\beta}}}}D\p{\tilde{q}||\theta}\right.\nonumber\\
&\left.\ \ \ \ \ \ \ \ \ \ -h\p{\min\ppp{\delta_{GV}\p{R},\tilde{q}+\frac{T}{\beta}}}+\log 2-R\right\}\nonumber\\
& = \min_{\tilde{q}\in\pp{\theta,\delta_{GV}\p{R}-T/\beta}}\pp{D\p{\tilde{q}||\theta}-h\p{\tilde{q}+\frac{T}{\beta}}}+\log 2-R\label{aneliaBSC}
\end{align}
where the last step can be easily verified using monotonicity properties of the binary entropy and divergence \cite[p. 6451 after eq. (72)]{exact_erasure}. Now, we analyze $E_b(R,T)$ given in \eqref{AneliaExp2}. Note that there is no conceptual difference between $E_a(R,T)$ and $E_b(R,T)$, and it can be verified that the latter can be written as
\begin{align}
\min_{\tilde{q}\in\hat{\calL}_{\text{BSC}}}\;D\p{\tilde{q}||\theta}
\end{align}
where 
\begin{align}
\hat{\calL}_{\text{BSC}}&\triangleq \left\{\vphantom{\max_{q:\;R\geq\log2-h(q)}}\tilde{q}: -\tilde{q}\cdot\beta\leq R+T\right.\nonumber\\
&\left.\ \ \ \ +\max_{q:\;R\geq\log2-h(q)}\pp{-q\cdot\beta+h(q)-\log2}\right\}.
\end{align}

Next, for any $\theta$, consider the right term in objective of \eqref{mainEq}. Note that the only difference between the left and the right terms in \eqref{mainEq} is just the inner minimization region. Accordingly, the right term takes the form
\begin{align}
&\min_{\tilde{Q}}\left\{D(\tilde{Q}||P_X\times W_\theta)\right.\nonumber\\
&\ \ \ \ \ \ \ \left.+\abs{\min_{Q\in\calQ_{\text{BSC}}(\tilde{Q})}\p{-H_{X\vert Y}(Q)+\log 2-R}}^+\right\}
\end{align}
where 
\begin{align}
&\calQ_{\text{BSC}}(\tilde{Q})\triangleq\left\{Q:\; Q_Y = \tilde{Q}_Y,\right.\nonumber\\
&\max_{\theta'}\ppp{\xi E_1(\theta')-\beta(\theta')\tilde{Q}\p{X\neq Y}+\log\p{1-\theta'}}\nonumber\\
&-\max_{\theta'}\ppp{\xi E_1(\theta')-\beta(\theta'){Q}\p{X\neq Y}+\log\p{1-\theta'}}\nonumber\\
&\left.\vphantom{Q_Y = \tilde{Q}_Y}-T\leq0\right\}.
\end{align}
Let $\tilde{E}_1\p{\theta}\triangleq E_1\p{\theta}+\log(1-\theta)/\xi$. Then, using exactly the same steps as before, we get
\begin{align}
&\min_{\tilde{Q}}\left\{\vphantom{\abs{\min_{Q\in\calQ_{\text{BSC}}(\tilde{Q})}\p{-H_{X\vert Y}\p{Q}+\log 2-R}}^+}D(\tilde{Q}||P_X\times W_\theta)\right.\nonumber\\
&\left.\ \ \ \ \ \ \ \ \ \ +\abs{\min_{Q\in\calQ_{\text{BSC}}(\tilde{Q})}\p{-H_{X\vert Y}\p{Q}+\log 2-R}}^+\right\}\nonumber\\
&\geq \min_{\tilde{Q}}\left\{\vphantom{\abs{\min_{Q\in\calQ_{\text{BSC}}(\tilde{Q})}\p{-H_{X\vert Y}\p{Q}+\log 2-R}}^+}D(\tilde{Q}||P_X\times W_\theta)\right.\nonumber\\
&\left.\ \ \ +\abs{\min_{Q\in\calQ_{\text{BSC}}(\tilde{Q})}\p{-H_{\calI\ppp{X\neq Y}}(Q)+\log 2-R}}^+\right\}\label{eqeq2}\\
& = \min_{\tilde{q}}\ppp{D\p{\tilde{q}||\theta}+\abs{\min_{q\in\tilde{\calQ}_{\text{BSC}}(\tilde{q})}\p{-h\p{q}+\log 2-R}}^+},
\end{align}
and equality can be achieved choosing $Q$ to be symmetric, as before, and
\begin{align}
\tilde{\calQ}_{\text{BSC}}(\tilde{q})&\triangleq\left\{q: \max_{\theta'}\ppp{\xi \tilde{E}_1(\theta')-\beta(\theta')\cdot\tilde{q}}\right.\nonumber\\
&\left.\ \ -\max_{\theta'}\ppp{\xi \tilde{E}_1(\theta')-\beta(\theta')\cdot q}-T\leq0\right\}.\label{ad2}
\end{align}
Next, we simplify the set $\tilde{\calQ}_{\text{BSC}}(\tilde{q})$. The constraint on $q$ in the definition of $\tilde{\calQ}_{\text{BSC}}(\tilde{q})$, is equivalent to demanding that there exist some $\theta'\in\Theta$ such that the following holds
\begin{align}
\beta(\theta')q-\xi \tilde{E}_1(\theta')\leq T-\max_{\theta''}\ppp{\xi \tilde{E}_1(\theta'')-\beta(\theta'')\cdot\tilde{q}},
\end{align}
or equivalently
\begin{align}
\beta(\theta')q\leq \xi \tilde{E}_1(\theta')+T-\max_{\theta''}\ppp{\xi \tilde{E}_1(\theta'')-\beta(\theta'')\cdot\tilde{q}}.
\end{align}
Now, note that $\beta(\theta')\geq0$ if and only if $\theta'\leq1/2$. Accordingly, this means that, in terms of $q$, $\tilde{\calQ}_{\text{BSC}}(\tilde{q})$ is equivalent to $q\leq q_1^*$ or $q\geq q_2^*$, where $q_1^*$ and $q_2^*$ are given in \eqref{q1Def} and \eqref{q2Def}, respectively. 
Consequently,
\begin{align}
&\min_{\tilde{q}}\ppp{D\p{\tilde{q}||\theta}+\abs{\min_{q\in\tilde{\calQ}_{\text{BSC}}(\tilde{q})}\p{-h\p{q}+\log 2-R}}^+}\nonumber\\
&=\min_{\tilde{q}}\ppp{D\p{\tilde{q}||\theta}+\abs{\p{-g\p{q_1^*,q_2^*}+\log 2-R}}^+}
\end{align}
where $g\p{q_1^*,q_2^*}$ is defined in \eqref{gDef}. Finally, we consider the right term in \eqref{mainEq2}. Using the same steps as above we obtain that
\begin{align}
\min_{\tilde{Q}\in\calL}D({\tilde{Q}}||P_X\times W_\theta) = \min_{\tilde{q}\in\calL_{\text{BSC}}}\;D\p{\tilde{q}||\theta}
\end{align}
where $\calL_{\text{BSC}}$ is defined in \eqref{qqww}-\eqref{qqww1}, shown at the top of the next page,
\begin{figure*}[!t]
\normalsize
\setcounter{MYtempeqncnt}{\value{equation}}
\setcounter{equation}{137} 
\begin{align}
\calL_{\text{BSC}}&\triangleq \left\{\tilde{q}: \max_\theta\pp{\xi E_1(\theta)-\tilde{q}\cdot\beta+\log\theta}\leq R+T+\max_{q:\;R\geq\log2-h(q)}\ppp{\max_\theta\pp{\xi E_1(\theta)-q\cdot\beta+\log\theta}+h(q)-\log2}\right\}\label{qqww}\\
&=\left\{\tilde{q}: \max_\theta\pp{\xi E_1(R,T,\theta)-\tilde{q}\cdot\beta(\theta)+\log\theta}\leq R+T\right.\nonumber\\
&\left.\ \ \ \ \ \ \ \ \ \ \ \ \ \ \ +\max_\theta\pp{\xi E_1(R,T,\theta)-\max\ppp{\theta,\delta_{\text{GV}}(R)}\cdot\beta(\theta)+\log\theta+h(\max\ppp{\theta,\delta_{\text{GV}}(R)})-\log2}\right\}\label{qqww1}
\end{align}
\hrulefill
\vspace*{4pt}
\end{figure*}
where the last step in \eqref{qqww1} follows from the fact that the maximizer $q$ in the optimization problem in \eqref{qqww} is given by $\max\ppp{\theta,\delta_{\text{GV}}(R)}$.  
\end{proof}

\appendices
\numberwithin{equation}{section}
\section{Proof of Lemmas \ref{lem:1} and \ref{lem:2}}
\label{app:1}

We begin with the proof of Lemma \ref{lem:1}. For the sake of this proof, we will explicitly designate the dependence on $T$, and denote the decoder in \eqref{optDD1}-\eqref{optDD2}, with parameter $T$, by ${\cal R}^{*}(T)$. Similarly, we will denote the value of \eqref{gammaTh0} as $\Gamma({\cal C},{\cal R},T)$. As we have mentioned, the decoder minimizing $\Gamma({\cal C},{\cal R},T)$ can be easily seen to be given by ${\cal R}^{*}(T)$. Now, assume conversely, that the exponents associated with $\mathbb{E}[\Gamma({\cal C},{\cal R}^{*}(T),T)]$
satisfy 
\begin{align}
E_{2}(R,T) & <T+E_{1}(R,T).\label{opps}
\end{align}
The opposite case, where the inequality in \eqref{opps} is reversed, can be handled analogously. Accordingly, this means that in the exponential scale, we have 
\begin{align}
\mathbb{E}[\Gamma({\cal C},{\cal R}^{*}(T),T)]\doteq e^{-nE_{2}(R,T)}.
\end{align}
Now, it is evident that $E_{1}(R,T)$ is a monotonically decreasing function of $T$ (allowing more erasures increases $\overline{\Pr}\ppp{\calE_1}$), and $E_{2}(R,T)$ is a monotonically increasing function of $T$ (allowing more erasures decreases $\overline{\Pr}\ppp{\calE_2}$) \cite{exact_erasure}. Now, due to the fact that $E_{1}(R,T)$ and $E_{2}(R,T)$ are continuous functions of $T$ \cite[eqs. (23) and (31)]{exact_erasure}, without loss of essential generality, there exists $\epsilon>0$ and $\delta_{1}\geq0,\delta_{2}>0$ such that
\begin{align}
E_{1}(R,T+\epsilon) & =E_{1}(R,T)-\delta_{1}
\end{align}
and
\begin{align}
E_{2}(R,T+\epsilon) & =E_{2}(R,T)+\delta_{2}
\end{align}
yet
\begin{align}
E_{2}(R,T+\epsilon) & <T+E_{1}(R,T+\epsilon).
\end{align}
Note that since it is not guaranteed that $E_{1}(R,T)$ or $E_{2}(R,T)$ are strictly monotonic, as it might be the case that $\delta_2=0$ too, i.e., regions of plateau. Accordingly, there are several cases to consider. First, if just $E_1(R,T)$ is within a plateau region, then the above arguments remain the same since $\delta_1=0$ but $\delta_2>0$. Secondly, if just $E_2(R,T)$ is within a plateau region, then we claim that this contradicts the optimality of Forney's decoder. Indeed, in this case, if we increase $T$ by some small $\epsilon>0$ (such that $E_2(R,T+\epsilon)$ is within the plateau), we obtain a decoder with exponents $E_2(R,T+\epsilon)= E_2(R,T)$ and $E_1(R,T+\epsilon)<E_1(R,T)$, and yet, due to continuity, $E_2(R,T)<E_1(R,T+\epsilon)$. Thus, we obtained that the optimal decoder ${\cal R}^{*}(T+\epsilon)$ has the same performance as ${\cal R}^{*}(T)$, in terms of $\mathbb{E}[\Gamma({\cal C},{\cal R}^{*}(T),T)]$, but with worse $\overline{\Pr}\ppp{\calE_1}$, which means not the best trade-off between $\overline{\Pr}\ppp{\calE_1}$ and $\overline{\Pr}\ppp{\calE_2}$, and thus contradicting the optimality of Forney's decoder at $T+\epsilon$. Finally, if both exponents are within a region of plateau, we can simply vary $T$ until we leave this region, and thus we can assume that $\delta_2>0$.
To conclude, we obtained that 
\begin{align}
\mathbb{E}[\Gamma({\cal C},{\cal R}(T+\epsilon),T)]&\doteq e^{-nE_{2}(R,T+\epsilon)}\\&\stackrel{.}{<}e^{-nE_{2}(R,T)}\\
&\doteq\mathbb{E}[\Gamma({\cal C},{\cal R}(T),T)]
\end{align}
which contradicts the property that ${\cal R}^{*}(T)$ is the minimizer of $\Gamma({\cal C},{\cal R},T)$. 

The proof of Lemma \ref{lem:2} follows the same steps as above. Indeed, the Lagrangian associated with the universal erasure decoder (see, \eqref{LastPass2}), has a similar structure to the Lagrangian associated with the optimal (known channel) decoder (see, \eqref{gammaTh}). As was mentioned in the proof of Theorem \ref{Th:1}, the exponents of $A_1$ and $A_2$ are both continuous. So, just as the difference between the exponents of $\overline{\Pr}\ppp{\calE_1}$ and $\overline{\Pr}\ppp{\calE_2}$ is $T$, the difference between the exponents of $A_1$ and $A_2$ is also $T$.

\section{Universal Decoder With Training}
\label{app:2}
To present the achieved fraction for the ensemble which includes training, we need to slightly generalize the definitions preceding Theorem \ref{Th:1}. The definitions of $G(R,T,\xi,\tilde Q)$ in \eqref{Gdef} and $\Omega(R,T,\xi,Q,\tilde Q)$ in \eqref{Odef} remain exactly the same. Define,
\begin{align}
J(Q,\bar Q)\triangleq (1-\alpha) \cdot I\p{\frac{Q-\alpha \bar Q}{1-\alpha}} \label{Jdef}.
\end{align}
For a given joint type $\bar Q$, we replace the definition of $\cal{Q}$ in \eqref{Qdef} with
\begin{align}
\calQ(\bar Q)&\triangleq \left\{(Q,\tilde{Q})\in\calD(\bar Q):\;J(Q,\bar Q)\geq R,\right.\nonumber\\
&\left.\ \ \ \ \ \ \ \ \ \ \ \ \ \ \ \ \ \ \ \ \ \ \ \ \ \ \Omega(R,T,\xi,Q,\tilde{Q})\leq0\right\},\label{Qdef_training}
\end{align}
and replace the definition of $\cal{L}$ in \eqref{Ldef} with
\begin{align}
&\calL(\bar Q)\triangleq\left\{ \tilde{Q}:G(R,T,\xi,\tilde{Q})\leq R+T\right.\nonumber\\
&\left.+\max_{Q:(Q,\tilde{Q})\in\calD(\bar Q),\;J(Q,\bar Q)\leq R}\pp{G(R,T,\xi,Q)-J(Q,\bar Q)}\right\},\label{Ldef_training}
\end{align}
where $\calD$ defined in \eqref{calD} is replaced by
\begin{align}
&\calD(\bar Q)\triangleq\left\{(Q,\tilde{Q}):\;Q_X = \tilde{Q}_X = P_X,\ Q_Y = \tilde{Q}_Y,\right.\nonumber\\
&\left.\ \ \ \ \ \ \ \ \ \ \ \ \ \ \frac{Q-\alpha \bar Q}{1-\alpha} \text{ is a probability distribution}\right\}.\label{calD_training}
\end{align}
Finally, define
\begin{align}
\Delta_{\theta}(Q,\bar Q) &\triangleq  \alpha\cdot D\p{\bar Q||\bar P_X\times W_\theta}\nonumber\\
&\ \ \ +(1-\alpha)\cdot D\p{\left.\left.\frac{ Q-\alpha \bar Q}{1-\alpha}\right\vert\right\vert P_X\times W_\theta}.
\end{align}
\begin{theorem}\label{Th:2}
Consider the ensemble defined above with types $\bar P_X$ and $P_X$, and a fixed $\alpha\in[0,1)$. Then, $\xi^*(R,T,\alpha,\bar P_X)$, defined in \eqref{xi star definition}, is equal to the largest number $\xi$ that simultaneously satisfies:
\begin{align}
&\max_{\theta\in\Theta}\left\{\vphantom{\min_{\bar Q:\; \bar Q_X = \bar P_X}\min_{(Q,\tilde{Q})\in\calQ(\bar Q)}}\xi E_1\p{R,T,\theta}-\right.\nonumber\\
&\left.\min_{\bar Q:\; \bar Q_X = \bar P_X}\min_{(Q,\tilde{Q})\in\calQ(\bar Q)}\ppp{\Delta_{\theta}(\tilde Q,\bar Q)+J(Q,\bar Q)-R}\right\}\leq0,\label{mainEq_training}
\end{align}
and
\begin{align}
\max_{\theta\in\Theta}\ppp{\xi E_1\p{R,T,\theta}-\min_{\bar Q:\; \bar Q_X = \bar P_X}\min_{\tilde{Q}\in\calL(\bar Q)}\Delta_{\theta}(\tilde Q,\bar Q)}\leq0.\label{mainEq2_training}
\end{align}
\end{theorem}

Choosing a strictly positive $\alpha$ has the potential to increase $\xi^*(R,T,0,\bar P_X)$. However, the behavior of $\xi^*(R,T,\alpha,\bar P_X)$ as a function of $\alpha$, is typically not monotonic. Indeed, as was mentioned before, on the one hand, as $\alpha$ increases, the decoder has better knowledge of the channel, even if it does not estimate it explicit. On the other hand, the number of available symbols $(1-\alpha)n$ that are used to distinguish the $M=\left\lceil e^{ nR}\right\rceil$ codewords from one another decreases\footnote{Note that the blocklength which is used to gauge the rate is still $n$.}. Thus, we expect that, in general, $\xi^*(R,T,\alpha,\bar P_X)$ will be maximized by some $\alpha^*\in(0,1)$. In addition, the type of the training part $\bar P_X$ may also be optimized. Evidently, Theorem \ref{Th:2} sets the stage for a reasonable criterion of optimal training, which includes both the relative training time and the optimal (type of the) training sequence. Similarly to Corollary \ref{cor:1}, one can derive a formula for $\xi^*(R,T,\alpha,\bar P_X)$, and then, a reasonable objective would be to optimize $\xi^*(R,T,\alpha,\bar P_X)$ over both $\alpha$ and $\bar{P}_X$.
%

\begin{proof}[Proof of Theorem \ref{Th:2}] 
The proof follows the same lines of the proof Theorem \ref{Th:1} so we mainly highlight the differences. We will represent the joint type of the training sequence $\bar{\bx}$ and the training part of $\by$ by $\bar Q$. Also, for a given $\by$ we will denote by $\bar \by$ the first $\alpha\cdot n$ symbols of $\by$ (i.e. the output symbols for the training sequnece). We continue \eqref{lastTerm} as shown in \eqref{lastTerm_training0}-\eqref{lastTerm_training}, presented at the top of the next page.
\begin{figure*}[!t]
\normalsize
\setcounter{MYtempeqncnt}{\value{equation}}
\setcounter{equation}{7} 
\begin{align}
A_{1}& = e^{-nT}\sum_{\bxt_{m}}P_{X}(\bX_{m}=\bx_{m})\sum_{\byt}f(\bx_{m},\by)\cdot\Pr\left\{\left.\by\in\hat{{\cal R}}_{m}^{c}\right|\bX_{m}=\bx_{m} \mbox{ transmitted}\right\}\label{lastTerm_training0}\\
& = e^{-nT}\sum_{\bxt_{m}}P_{X}(\bX_{m}=\bx_{m})\sum_{\bar Q}\sum_{\byt:\; \hat P_{\bar \bxt\bar \byt}=\bar Q}f(\bx_{m},\by)\cdot\Pr\left\{\left.\by\in\hat{{\cal R}}_{m}^{c}\right|\bX_{m}=\bx_{m} \mbox{ transmitted}\right\} \\
& \doteq e^{-nT}\max_{\bar Q}\sum_{\bxt_{m}}P_{X}(\bX_{m}=\bx_{m})\sum_{\byt:\; \hat P_{\bar \bxt\bar \byt}=\bar Q}f(\bx_{m},\by)\cdot\Pr\left\{\left.\by\in\hat{{\cal R}}_{m}^{c}\right|\bX_{m}=\bx_{m} \mbox{ transmitted}\right\} \\
& \doteq e^{-nT}\max_{\bar Q}\max_{\tilde Q}\sum_{\bxt_{m}}P_{X}(\bX_{m}=\bx_{m})\sum_{\byt:\; \hat P_{\bar \bxt\bar \byt}=\bar Q,\; \hat P_{\bxt_m \byt}=\tilde Q}f(\bx_{m},\by)\cdot\Pr\left\{\left.\by\in\hat{{\cal R}}_{m}^{c}\right|\bX_{m}=\bx_{m} \mbox{ transmitted}\right\} \\
& \doteq e^{-nT}\max_{\bar Q}\max_{\tilde Q}\sum_{\bxt_{m}}P_{X}(\bX_{m}=\bx_{m})\sum_{\byt:\; \hat P_{\bar \bxt\bar \byt}=\bar Q,\; \hat P_{\bxt_m \byt}=\tilde Q}f(\bx_{m},\by)\cdot\Pr\left\{\left.\by\in\hat{{\cal R}}_{m}^{c}\right|\bX_{m}=\bx_{m} \mbox{ transmitted}\right\}. \label{lastTerm_training}
\end{align}
\hrulefill
\vspace*{4pt}
\end{figure*}
Now, if the joint type of the training sequence $\bar{\bx}$ and the training part of $\by$ is $\bar Q$, and the type of the entire codeword $\bx_m$ and $\by$ is $Q$, then the type of the last $(1-\alpha)n$ symbols of $\bx_m$ and $\by$ is $\frac {Q -\alpha \bar Q}{1-\alpha}$. So, the probability in Eq. \eqref{probability of joint type} should now be replaced by
\begin{align}
p\doteq \exp\ppp{-n\cdot J(Q,\bar Q)}.
\end{align}
Consequently, for $Q\in\calS(\hat{P}_{\byt})$, we have \eqref{largeDe_training}, and we define \eqref{Udef_training}, both shown at the top of the next page.
\begin{figure*}[!t]
\normalsize
\setcounter{MYtempeqncnt}{\value{equation}}
\setcounter{equation}{13} 
\begin{align}
&\Pr\left\{ N_{\by}(Q)\geq e^{n\Omega(Q,\tilde{Q})}\right\} \doteq 
\begin{cases}
\exp\left\{ -n\left|J(Q,\bar Q)-R\right|^{+}\right\} \ & \Omega(Q,\tilde{Q})\leq0\\
1 \ &0<\Omega(Q,\tilde{Q})\leq R-J(Q,\bar Q)\\
0 \ & \Omega(Q,\tilde{Q})>R-J(Q,\bar Q)
\end{cases}\label{largeDe_training}\\
&U(\tilde{Q},\bar{Q})\triangleq \max_{Q\in\calS(\tilde{Q}_Y)}\begin{cases}
\exp\left[ -n(J(Q,\bar Q)-R)\right] \ & \Omega(Q,\tilde{Q})\leq0,\;J(Q,\bar Q)>R\\
1 \ &J(Q,\bar Q)\leq R,\;\Omega(Q,\tilde{Q})\leq R-J(Q,\bar Q)\\
0 \ &\text{otherwise}
\end{cases}\label{Udef_training}
\end{align}
\hrulefill
\vspace*{4pt}
\end{figure*}
Thus,  using the same derivation as in \eqref{maxMes}, but with \eqref{largeDe_training} replacing \eqref{largeDe}, we may continue \eqref{lastTerm_training} as follows:
\begin{align}
&A_{1} \doteq  e^{-nT}\max_{\bar Q}\max_{\tilde{Q}}\exp\pp{\alpha nH_{Y\vert X}(\bar{Q})}\nonumber\\
&\cdot\exp\pp{(1-\alpha)n H_{Y\vert X}\p{\frac {Q -\alpha \bar Q}{1-\alpha}}}
\exp\left[n G(\tilde{Q})\right] U(\tilde{Q},\bar{Q}).
\end{align}

Thus, we obtain that the exponent of $A_{1}$ is given by
\[
\lim_{n\to\infty}\frac{1}{n}\log A_{1}= -T-\min\ppp{\tilde{E}_{a}(R,T,\xi),\tilde{E}_{b}(R,T,\xi)},
\]
in which
\begin{align}
&\tilde{E}_{a}(R,T,\xi)\triangleq\min_{\bar Q} \min_{(Q,\tilde{Q})\in\calQ(\bar Q)}\left[\vphantom{H_{Y\vert X}\p{\frac {Q -\alpha \bar Q}{1-\alpha}}} -\alpha H_{Y\vert X}(\bar{Q}) \right.\nonumber\\
&\left.- (1-\alpha) H_{Y\vert X}\p{\frac {Q -\alpha \bar Q}{1-\alpha}}-G(\tilde{Q})+J(Q,\bar Q)-R\right]\label{Ea0}
\end{align}
where $\calQ(\bar Q)$ is defined in \eqref{Qdef_training}, and 
\begin{align}
&\tilde{E}_{b}(R,T,\xi)\triangleq\min_{\bar Q}\min_{\tilde{Q}\in{\cal L(\bar Q)}}\left[-\alpha H_{Y\vert X}(\bar{Q})\right.\nonumber\\
&\left.\ \ \ \ \ \ \ \ \ \ \ \ \ - (1-\alpha) H_{Y\vert X}\p{\frac {Q -\alpha \bar Q}{1-\alpha}}-G(\tilde{Q})\right]
\end{align}
where $\calL(\bar Q)$ is defined in \eqref{Ldef_training}. Now, we want to find the maximal $\xi$ for which 
\begin{align}
-T-\tilde{E}_{a}(R,T,\xi)\leq0,
\end{align}
\begin{align}
-T-\tilde{E}_{b}(R,T,\xi)\leq0.
\end{align}
The expressions for $\tilde{E}_{a}(R,T,\xi)$ and $\tilde{E}_{b}(R,T,\xi)$ can be simplified just as in the proof of Theorem \ref{Th:1} (see Eqs. \eqref{maxtht} and \eqref{maxtht2}). This results the conditions appearing in the theorem. 
\end{proof}

\section*{Acknowledgment}

The authors would like to thank the associate editor, Jun Chen, and the anonymous referees for their suggestions which helped improving the content of this paper.

\ifCLASSOPTIONcaptionsoff
  \newpage
\fi
\bibliographystyle{IEEEtran}
\bibliography{strings}
\vspace{-0.8cm}
\begin{IEEEbiography}
{Wasim Huleihel} (S'14) received the B.Sc. and the M.Sc. degrees in electrical
engineering from the Ben-Gurion University of the Negev, Beer-Sheva,
Israel, in 2012 and 2013, respectively. Currently, he is working toward the
Ph.D. degree in electrical engineering at the Technion Institute of Technology,
Haifa, Israel. His research interests are in the areas of information theory, and
relationships between information theory, statistics, detection, and estimation.
\end{IEEEbiography}
\vspace{-1cm}
\begin{IEEEbiography}
{Nir Weinberger} (S'14) received the B.Sc. and M.Sc. degrees (both summa cum laude) from Tel-Aviv University, Tel-Aviv, Israel, in 2006 and 2009, respectively. From 2006 to 2013 he served as an algorithms Engineer in the Israeli Defense Forces. Currently, he is pursuing his Ph.D. degree at the Technion - Israel Institute of Technology, Haifa, Israel. His research interest is information theory, with emphasis on large deviations aspects in coding problems.
\end{IEEEbiography}
\vspace{-1cm}
\begin{IEEEbiography}
{Neri Merhav} (S'86--M'87--SM'93--F'99) was born 
in Haifa, Israel, on March 16, 1957. He received the
B.Sc., M.Sc., and D.Sc.\ degrees from the Technion, 
Israel Institute of Technology,
in 1982, 1985, and 1988, 
respectively, all in electrical engineering.

From 1988 to 1990 he was with AT\&T Bell Laboratories,
Murray Hill, NJ, USA. 
Since 1990 he has been with the 
Electrical Engineering Department
of the Technion, where
he is now the Irving Shepard Professor.
During 1994--2000 he was also serving 
as a consultant to the Hewlett--Packard 
Laboratories -- Israel (HPL-I). 
His research interests include
information theory, statistical communications,
and statistical signal processing. He is especially
interested in the areas of lossless/lossy source coding
and prediction/filtering, relationships between information
theory and statistics, detection,
estimation, as well as 
in the area of Shannon Theory, including topics in
joint source--channel coding, source/channel simulation,
and coding with side information with applications to
information hiding and watermarking systems.
Another recent research interest concerns the relationships between
Information Theory and statistical physics.

Dr.\ Merhav was a co-recipient of the 1993 Paper Award 
of the IEEE Information Theory Society and he is a
Fellow of the IEEE since 1999. He also received the 
1994 American Technion Society 
Award for Academic Excellence and the 2002
Technion Henry Taub Prize for Excellence in Research.
From 1996 until 1999 he served as an Associate Editor
for Source Coding to the 
{\sc IEEE Transactions on Information Theory}.
He also served as a co--chairman of the Program Committee
of the 2001 IEEE International Symposium on Information Theory.
He is currently on the Editorial Board of {\sc Foundations
and Trends in Communications and Information Theory}.
\end{IEEEbiography}

\end{document}